\documentclass[11pt]{article}
\usepackage{libertine}
\usepackage{fullpage}
\usepackage{microtype}

\usepackage{amsmath,amssymb,amsthm,mathtools,thmtools}

\usepackage[breaklinks=true]{hyperref}
\usepackage[svgnames]{xcolor}
\usepackage[capitalise,nameinlink]{cleveref}
\hypersetup{colorlinks={true},linkcolor={DarkBlue},citecolor=[named]{DarkGreen}}

\usepackage{natbib}

\usepackage{algorithm}
\usepackage{algpseudocode}
\usepackage{float}

\newtheorem{theorem}{Theorem}[section]
\newtheorem{lemma}[theorem]{Lemma}
\newtheorem{proposition}[theorem]{Proposition}
\newtheorem{corollary}[theorem]{Corollary}

\usepackage[svgnames]{xcolor}

\usepackage{cleveref}

\newcommand{\voters}{\mathcal{V}} 
\newcommand{\candidates}{\mathcal{C}} 
\newcommand{\ut}{\textsf{u}}
\newcommand{\uprofile}{\mathcal{U}} 
\newcommand{\sprofile}{\mathcal{S}} 
\newcommand{\tr}{\boldsymbol{\tau}} 
\newcommand{\sw}{\operatorname{\normalfont\textsf{SW}}} 
\newcommand{\vsw}{\operatorname{\normalfont\textsf{VSW}}} 
\newcommand{\fd}{\mathrm{Det}} 
\newcommand{\fr}{\mathrm{Rand}} 
\newcommand{\distortion}{\operatorname{\textsf{dist}}} 

\newcommand{\expected}{\mathbb{E}} 
\newcommand{\sstar}{\textsf{s}^*}

\newcommand{\N}{\mathcal{N}}
\newcommand{\R}{\mathbb{R}}
\newcommand{\snorm}[1]{\left\lVert #1\right\rVert_{\N}} 
\newcommand{\cmt}{\normalfont\operatorname{\textsf{W}}} 
\newcommand{\ones}{\mathbf{1}} 

\usepackage{authblk}

\title{\bf The Distortion of Threshold Approval Multiwinner Voting}

\author[1]{Mohamad Latifian}
\affil[1]{University of Edinburgh, UK}
\author[2]{Marzieh Nilipour}
\affil[2]{Sharif University of Technology, Iran}
\author[3]{Alexandros A. Voudouris}
\affil[3]{University of Southern Denmark, Denmark}

\date{}
\begin{document}
\maketitle

\begin{abstract}
	We study multiwinner voting with $n$ voters and $m$ candidates, where the goal is to select a committee of size $k$. Voters have unit-sum cardinal utilities but report only coarse information induced by $t$ utility thresholds. In particular, each voter partitions the candidates into approval sets, with the utilities of all candidates in the same set lying between the same pair of consecutive thresholds. We model the utility of each voter for a committee as a normalized monotone symmetric norm, capturing, among others, additive welfare, best-representative utility, top-$r$ welfare, $\ell_p$ welfare, and normalized non-increasing ordered weighted averages. We design a deterministic rule with distortion at most $2(2m)^{1/t}$ and a randomized rule with distortion at most $2\min\{m^{1/(t+1)},m/k\}$. These guarantees hold for every committee size $k$ and every norm in our framework. Consequently, a single threshold yields distortion $O(m)$ for deterministic rules and $O(\sqrt{m})$ for randomized rules, while $O(\log m)$ thresholds suffice for constant distortion. We complement these results with norm-dependent lower bounds that asymptotically match our upper bounds for constant committee sizes, including the single-winner case. 
\end{abstract}

\section{Introduction}
Multiwinner voting studies how voters select a fixed-size \emph{committee} of candidates. Such problems arise in a wide range of applications, including the election of representatives, shortlisting applicants for a job, and choosing a diverse set of alternatives for recommendation or search \citep{elkind-et-al-2017,lackner-skowron-2023}. A fundamental question in these settings is how much information about the preferences of the voters should be elicited. Exact \emph{cardinal utilities} provide detailed information but may be demanding to report, whereas more commonly used forms of input, such as \emph{ordinal rankings} and \emph{approval votes}, require substantially less effort at the cost of losing information about the strength of preferences. This loss of information may lead to outcomes with significantly less social welfare, a phenomenon commonly quantified via the worst-case notion of \emph{distortion}~\citep{boutilier-et-al-2015,distortion-survey}.

\emph{Threshold approvals} provide an intermediate form of preference elicitation that lies between coarse ordinal information and full cardinal utilities. They have recently been considered within the distortion framework for several related settings, including metric single-winner voting \citep{anshelevich-et-al-2025}, utilitarian matching \citep{latifian-voudouris-2025}, and metric facility-assignment problems \citep{gkatzelis-et-al-2026}. The closest to our setting is the work of \citet{latifian-voudouris-2025}, which considers the same threshold-based information model and the same underlying cardinal utility model, but for matching rather than multiwinner voting. In our setting, given a collection of common utility thresholds, each voter partitions the candidates according to the intervals containing the utilities she assigns to them. With a single threshold, each voter distinguishes between candidates she values above or below that threshold, while additional thresholds allow her to express progressively finer levels of preference intensity.

The multiwinner setting involves an additional modeling question: \emph{how should the utilities that a voter assigns to the selected candidates be combined into her utility for the committee?} The appropriate answer depends on the application at hand. If the committee consists of projects and a voter is interested in all of them, it is natural to sum the utilities she assigns to each of them. If the committee instead consists of representatives or recommendations, the voter may care primarily about only one or a few of the committee members she values most, as in representation-focused objectives related to Chamberlin-Courant \citep{chamberlin-courant-1983}. More generally, committee members may contribute differently depending on how highly the voter values them, for example by considering only the top-$r$ committee members or by assigning decreasing importance to lower-valued members, as in OWA-based approaches to multiwinner voting \citep{elkind-ismaili-2015}. To capture these different applications within a common framework, we define the utility of a voter for a committee through a normalized monotone symmetric norm, which includes additive, best-representative, top-$r$, $\ell_p$-norm, and normalized non-increasing OWA utility. This leads to the central question of our work, namely \emph{the tradeoff between the number of utility thresholds and the achievable distortion}.

\subsection{Our Contribution}
We focus on a multiwinner setting with $n$ voters and $m$ candidates, where each voter has unit-sum utilities over the candidates and the goal is to select a committee of size $k$. Given a collection of utility thresholds, each voter reports a partition of the candidates into approval sets according to the threshold intervals containing the utilities she assigns to them. These threshold approvals are the input to the voting rule. The utility of each voter for a committee is defined by a normalized monotone symmetric norm, and the social welfare is the sum of these utilities. The norm determines how the values she assigns to the selected candidates are combined, while the threshold approvals determine how much information about these values is available to the voting rule.

We first design a deterministic rule that achieves distortion at most $2(2m)^{1/t}$ by using $t$ geometrically spaced thresholds. The rule transforms the threshold approvals into \emph{virtual utilities} that are equal to the lower endpoints of the corresponding threshold intervals, and then selects a committee that maximizes the virtual welfare. We then turn to randomization and design a rule with distortion at most $2\min\{m^{1/(t+1)},m/k\}$. The rule selects, with equal probability, either a committee maximizing virtual welfare or a uniformly random size-$k$ committee. The two branches play complementary roles: the former exploits the information provided by the threshold approvals, while the latter guarantees a baseline level of welfare independently of them. These results are presented in \Cref{sec:upper}.

We complement these upper bounds in \Cref{sec:lower} with lower bounds that hold for arbitrary thresholds. The main idea is to exploit the fact that different utility values within the same threshold interval induce the same threshold approval and are therefore indistinguishable. For deterministic rules, if the last threshold is too large, we can place every candidate below this threshold while concentrating most of the utility on candidates outside the selected committee, leading to unbounded distortion. Otherwise, there must be a sufficiently large multiplicative gap between two consecutive thresholds. We exploit this gap to construct an approval set containing $k+1$ candidates while allowing one of them to have substantially larger utility than the others. For randomized rules, we follow the same idea, but exploit candidates that are selected with sufficiently small probability. The resulting lower bounds capture the dependence on $m$, $k$, and the norm used to define committee utility. In particular, for single-winner voting and, more generally, for committees of constant size, they yield that the optimal distortion is $\Theta(m^{1/t})$ for deterministic rules and $\Theta(m^{1/(t+1)})$ for randomized rules.
In \Cref{sec:best-rep}, we focus on best-representative utility and establish tight deterministic bounds when the committee excludes a constant fraction of the candidates, as well as tight randomized bounds for every committee size.

\subsection{Related Work}
Much of the early literature on distortion focused on single-winner voting rules that have access only to the ordinal preferences of the voters. Two main models have been considered, depending on whether the voters have underlying unit-sum cardinal utilities~\citep{procaccia-rosenschein-2006,boutilier-et-al-2015,ebadian-et-al-2024} or metric costs induced by distances~\citep{gkatzelis-et-al-2020,charikar-et-al-2024}. 
Beyond single-winner voting, the distortion of ordinal rules has been studied for many other social choice problems, including multiwinner voting~\citep{caragiannis-et-al-2017,caragiannis-et-al-2022,babashah-et-al-2025} and one-sided matching~\citep{filos-ratsikas-et-al-2014,amanatidis-et-al-2022,anari-et-al-2023}. 
We refer to \citet{distortion-survey} for a broader overview of the distortion literature.

A more recent line of work studies how distortion improves when rules have access to additional information about the underlying cardinal preferences. One approach is to combine ordinal rankings with a limited number of cardinal queries, such as value or distance queries, leading to explicit tradeoffs between their number and the achievable distortion~\citep{amanatidis-et-al-2021,amanatidis-et-al-2022,amanatidis-et-al-2024,burkhardt-et-al-2024,pulyassary-swamy-2025,fotakis-et-al-2025}. \citet{mandal-et-al-2019,mandal-et-al-2020} studied related tradeoffs between distortion and the communication complexity of voting rules. 
Another related direction considers learning-augmented rules, where ordinal preferences are accompanied by possibly inaccurate predictions about the underlying cardinal information, leading to tradeoffs between consistency and robustness~\citep{berger-et-al-2024,filos-ratsikas-et-al-2026}.

Another way of providing limited cardinal information is through threshold approvals. \citet{benade-et-al-2021} studied threshold approval votes as an elicitation method for participatory budgeting, where projects have costs and the selected outcome is subject to a budget constraint. \citet{bhaskar-et-al-2018} considered threshold approval voting for welfare maximization in multiwinner elections under additive utilities, but each voter was presented with a single independently chosen threshold. In contrast, our model uses multiple common thresholds and allows for more general ways of defining a voter's utility for a committee. As already mentioned, threshold approvals were also used to improve distortion in metric single-winner voting~\citep{anshelevich-et-al-2025} and metric facility-assignment problems~\citep{gkatzelis-et-al-2026}. Finally, \citet{latifian-voudouris-2025} considered multiple common utility thresholds in matching problems and established distortion bounds as a function of the number of thresholds.

Our work is also related to the literature on how voters evaluate committees in multiwinner elections. Representation-focused objectives assign to each voter the utility of her most preferred committee member~\citep{chamberlin-courant-1983}, while additive objectives sum the utilities of all selected candidates. More generally, OWA-based rules allow committee members to contribute differently according to how highly they are valued by a voter~\citep{elkind-ismaili-2015}, and broader classes of committee scoring rules have been studied in~\citep{elkind-et-al-2017,faliszewski-et-al-2016}. Existing work on distortion in multiwinner voting has typically focused on additive committee utility or objectives determined by the $q$-th most-preferred committee member. Our normalized monotone symmetric norm model encompasses these cases, as well as top-$r$, $\ell_p$-norm, and normalized non-increasing OWA utility.


\section{Model}\label{sec:model}
For a positive integer $r$, let $[r]=\{1,\ldots,r\}$. 

\subsection{Utilities and Elicitation Method}
An election instance consists of a set $\voters=[n]$ of $n$ \emph{voters} and a set $\candidates=\{c_1,\ldots,c_m\}$ of $m\ge2$ \emph{candidates}. Voter $i\in\voters$ has a unit-sum utility function $\ut_i:\candidates\to[0,1]$ such that $\sum_{c\in\candidates}\ut_i(c)=1$. 
Together, the utility functions form the \emph{utility profile} $\uprofile \coloneqq(\ut_1,\ldots,\ut_n)$.
We also think of $\ut_i$ as a vector in $\R^m$ whose coordinates are indexed by candidates. 

Fix an integer $t\ge1$ and a common \emph{threshold vector} $\tr=(\tau_1,\ldots,\tau_t)$ with $1=\tau_0>\tau_1>\cdots>\tau_t>0$. Voter $i$ reports $t+1$ disjoint \emph{approval sets}
\begin{align*}
	\sprofile_{i,1}&\coloneqq\{c\in \candidates:1\ge \ut_i(c)\ge\tau_1\},\\
	\sprofile_{i,j}&\coloneqq\{c\in \candidates:\tau_{j-1}>\ut_i(c)\ge\tau_j\}
	&& (2\le j\le t),\\
	\sprofile_{i,t+1}&\coloneqq\{c\in \candidates:\tau_t > \ut_i(c) \ge 0\}.
\end{align*}
Let $\sprofile\coloneqq(\sprofile_{i,j})_{i\in \voters,\,j\in[t+1]}$ be the input profile. We will write $\uprofile\triangleright \sprofile$ when $\uprofile$ satisfies the unit-sum assumption and induces the input profile reported by the voters. 

\subsection{From Candidate Utility to Committee Utility}\label{sec:5norm-def}
For a fixed integer $k\in[m]$, a committee is a set $\cmt\subseteq \candidates$ of size $k$.
Let $\ut_i^{\cmt}$ be the vector obtained from $\ut_i$ by keeping the coordinates in $\cmt$, and setting every other coordinate to zero. More generally, we use $\mathbf{x}^A$ for the restriction of a vector $\mathbf{x}$ to a set $A\subseteq \candidates$. Several familiar ways to measure committee utility can be expressed in terms of $\ut_i^{\cmt}$.
\begin{itemize}
	\item 
	{\bf Best-representative utility.}
	A voter who cares only about her favorite selected candidate has utility
	\[
	\max_{c\in \cmt}\ut_i(c)=\lVert \ut_i^{\cmt}\rVert_\infty,
	\]
	where the maximum is taken over the members of the committee for the individual voter. This corresponds to the cardinal analogue of the Chamberlin-Courant objective, which aggregates such best-representative utilities across voters.
	
	\item 
	{\bf Additive utility.}
	A voter who benefits from all selected candidates has utility
	\[
	\sum_{c\in \cmt}\ut_i(c)=\lVert \ut_i^{\cmt}\rVert_1.
	\]
	Thus, every selected candidate contributes her full utility to the voter. Unlike best-representative utility, this objective takes into account the utilities of all committee members rather than only that of the voter’s favorite one.
	
	\item 
	{\bf Top-$r$ utility.}
	Order the selected utilities as $\ut_i^{(1)}(\cmt)\ge\cdots\ge \ut_i^{(k)}(\cmt)$. For an integer $r\in[m]$, top-$r$ utility gives voter $i$ utility
	\[
	\sum_{j=1}^{\min\{r,k\}}\ut_i^{(j)}(\cmt).
	\]
	Thus, only the voter’s best $r$ committee members contribute to her utility. This recovers best-representative utility when $r=1$ and additive utility when $r\ge k$.
	
	\item 
	{\bf Ordered weighted average utility.}
	For weights $w_1\ge\cdots\ge w_k\ge0$ with $w_1>0$, an ordered weighted average (OWA) assigns voter $i$ utility
	\[
	\sum_{j=1}^k w_j \ut_i^{(j)}(\cmt).
	\]
	Thus, the weight assigned to a selected candidate depends on her position in the ordering of the voter’s selected utilities, rather than on her identity. When placing this objective in our norm model, we normalize the weights so that $w_1=1$, since multiplying all weights by the same positive factor only rescales every committee utility by that factor.
	
	\item 
	{\bf $\ell_p$ utility.}
	For $1\le p<\infty$, the $\ell_p$ objective assigns voter $i$ utility
	\[
	\lVert \ut_i^{\cmt}\rVert_p=\left(\sum_{c\in \cmt}\ut_i(c)^p\right)^{1/p}.
	\]
	These objectives interpolate between additive utility at $p=1$ and best-representative utility in the limit as $p$ tends to infinity. As $p$ increases, relatively more weight is placed on the voter’s highest utilities among the selected candidates.
\end{itemize}

\subsection{Monotone Symmetric Norms}
To formalize how a voter evaluates a committee of candidates, we adopt a norm $\snorm{\cdot}$ on $\R^m$ that satisfies a set of basic normative axioms:
\begin{itemize}
	\item {\bf Zero benefit.}
	A voter receives zero committee utility exactly when every selected candidate has zero utility for her:
	\[
	\snorm{\mathbf{x}}=0 \iff \mathbf{x}=0.
	\]
	This is precisely the \emph{positive definiteness} property of a norm.
	
	\item {\bf Monotonicity.}
	For all non-negative vectors $\mathbf{x},\mathbf{y}$:
	\[
	0\le \mathbf{x}\le \mathbf{y}
	\implies
	\snorm{\mathbf{x}}\le \snorm{\mathbf{y}},
	\]
	where the inequalities between vectors are coordinate-wise. Thus, if the utilities of the selected candidates weakly increase for a voter, her committee utility cannot decrease.
	
	\item {\bf Symmetry.}
	For any vector $\mathbf{x}\in\R^m$ and any $m\times m$ permutation matrix $P$,
	\[
	\snorm{P\mathbf{x}}=\snorm{\mathbf{x}}.
	\]
	Thus, permuting the utilities of the selected candidates does not change the voter's committee utility. In other words, the aggregator depends only on the multiset of utilities of the selected candidates, and not on their ordering.
	
	\item {\bf Positive homogeneity.}
	Multiplying every selected utility by a scalar $\alpha\ge0$ multiplies the committee utility by the same factor:
	\[
	\snorm{\alpha \mathbf{x}}=\alpha\snorm{\mathbf{x}}.
	\]
	
	\item {\bf Triangle inequality.}
	For any two non-negative utility vectors $\mathbf{x}$ and $\mathbf{y}$,
	\[
	\snorm{\mathbf{x}+\mathbf{y}}
	\le
	\snorm{\mathbf{x}}+\snorm{\mathbf{y}}.
	\]
	Thus, the committee utility obtained from the coordinate-wise sum of two utility vectors cannot exceed the sum of their separate committee utilities.
	
	\item {\bf Normalization.}
	We set $\snorm{\mathbf{e}_c}=1$ for every candidate $c$. By positive homogeneity,
	\[
	\snorm{\alpha \mathbf{e}_c}=\alpha.
	\]
	Thus, a singleton committee containing candidate $c$ gives voter $i$ committee utility exactly $\ut_i(c)$. This places different norms on a common scale.
\end{itemize}

All utility definitions presented in \Cref{sec:5norm-def} satisfy these properties. For completeness, the top-$r$ norm of an arbitrary vector $\mathbf{x}\in\R^m$ is defined as the sum of its $r$ largest absolute coordinates, or equivalently,
\[
\max_{|A|=r}\sum_{c\in A}|\mathbf{x}_c|.
\]
This variational representation immediately implies the triangle inequality, while the remaining norm properties are straightforward to verify.

To express an OWA as a norm on $\R^m$, we can extend the weights by zeros beyond $k$ and set $w_{m+1}\coloneqq0$. Writing $|\mathbf{x}|_{(1)}\ge\cdots\ge|\mathbf{x}|_{(m)}$ for the sorted absolute coordinates, we have
\[
\lVert \mathbf{x}\rVert_w\coloneqq\sum_{j=1}^m w_j|\mathbf{x}|_{(j)}
=\sum_{r=1}^m(w_r-w_{r+1})\lVert \mathbf{x}\rVert_{\mathrm{top}\text{-}r}.
\]
Since the weights are non-increasing, this is a non-negative combination of top-$r$ norms, with at least one positive coefficient. It is therefore a monotone symmetric norm, normalized when $w_1=1$.

\subsection{Committee Welfare under Monotone Symmetric Norms} 
For a fixed normalized monotone symmetric norm, we define voter $i$'s committee utility as
\[
U_i^\N(\cmt)\coloneqq\snorm{\ut_i^{\cmt}}.
\]
Then, the social welfare of a committee $\cmt$ is the total voter utility:
\[
\sw_\N(\cmt,\uprofile)\coloneqq\sum_{i\in \voters}U_i^\N(\cmt).
\]
When clear from context, we omit $\uprofile$ and simply write $\sw_\N(\cmt)$. 
For any size-$k$ committee $\cmt$, define
\[
L_{\N,k}\coloneqq\snorm{\ones_{\cmt}},
\]
where $\ones_{\cmt}\in\{0,1\}^m$ is the indicator vector of $\cmt$, with value one on the coordinates corresponding to candidates in $\cmt$ and zero elsewhere. Thus, $L_{\N,k}$ is the norm of a vector with value one on exactly $k$ coordinates. By symmetry, it does not depend on which $k$ coordinates are used. The following bounds place this quantity between its values under best-representative and additive utility.

\begin{proposition}\label{prop:LNK}
	For every normalized monotone symmetric norm and every $k\in[m]$,
	$1\le L_{\N,k}\le k$.
\end{proposition}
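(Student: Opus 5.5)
Both bounds follow from the listed axioms applied to the indicator vector $\ones_{\cmt}$ of a size-$k$ committee $\cmt$; by symmetry we may fix $\cmt$ arbitrarily.

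\emph{Upper bound.} We write $\ones_{\cmt}=\sum_{c\in\cmt}\mathbf{e}_c$. Each summand is non-negative, so we can apply the triangle inequality repeatedly, by induction on the number of summands:
\[
L_{\N,k}=\snorm{\ones_{\cmt}}\le\sum_{c\in\cmt}\snorm{\mathbf{e}_c}.
\]
The partial sums stay non-negative, so the restriction of the triangle inequality to non-negative vectors causes no trouble. Normalization gives $\snorm{\mathbf{e}_c}=1$ for every $c$, so the right-hand side equals $k$.

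\emph{Lower bound.} We pick any $c\in\cmt$, which is possible since $k\ge1$. Then $0\le\mathbf{e}_c\le\ones_{\cmt}$ coordinatewise, because $\ones_{\cmt}$ equals one at coordinate $c$ and is non-negative elsewhere. Monotonicity and normalization then give
\[
1=\snorm{\mathbf{e}_c}\le\snorm{\ones_{\cmt}}=L_{\N,k}.
\]

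There is no real obstacle here. The only care needed is that the triangle inequality and monotonicity are stated only for non-negative vectors, and all vectors involved are non-negative. Both bounds are attained: the best-representative utility $\lVert\cdot\rVert_\infty$ gives $1$, and the additive utility $\lVert\cdot\rVert_1$ gives $k$.
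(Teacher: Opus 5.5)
Your proof is correct and follows the paper's argument: the lower bound comes from $\mathbf{e}_c\le\ones_{\cmt}$ via monotonicity and normalization, and the upper bound from applying the triangle inequality to $\ones_{\cmt}=\sum_{c\in\cmt}\mathbf{e}_c$ together with normalization. Your remark that the partial sums stay non-negative, so the triangle inequality as stated for non-negative vectors applies, is a small extra bit of care the paper leaves implicit.
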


\begin{proof}
	Fix any candidate $c\in\cmt$. Since $\ones_{\cmt}\ge \mathbf{e}_c$ coordinate-wise, monotonicity and normalization yield
	\[
	L_{\N,k} =\snorm{\ones_{\cmt}} \ge \snorm{\mathbf{e}_c} =1.
	\]
	For the upper bound, observe that
	\[
	\ones_{\cmt}=\sum_{c\in\cmt}\mathbf{e}_c.
	\]
	Hence, by the triangle inequality and normalization,
	\[
	L_{\N,k} =\snorm{\sum_{c\in\cmt}\mathbf{e}_c} \le \sum_{c\in\cmt}\snorm{\mathbf{e}_c} =k. \qedhere
	\]
\end{proof}

\noindent
For the preceding utility objectives,
$
L_{\ell_1,k}=k,
L_{\ell_p,k}=k^{1/p},
L_{\ell_\infty,k}=1, 
L_{\mathrm{top}\text{-}r,k}=\min\{r,k\}
$.
For normalized OWA weights, we have $L_{w,k}=\sum_{j=1}^k w_j$. These quantities will appear throughout the analysis in the next section, but eventually cancel out and do not appear in the final upper bounds.

\subsection{Committee Rules and Distortion}
A deterministic size-$k$ committee rule fixes a threshold vector and maps every feasible input profile to a committee, while a randomized rule fixes a threshold vector and maps each profile to a probability distribution over size-$k$ committees. The threshold vector is chosen before the voters submit their reports, and the rule has access only to these reports, the public parameters, and the chosen norm. We impose no neutrality or symmetry requirement on the rule.

For a utility profile $\uprofile$, let
$s_\N^*(\uprofile)\coloneqq
\max_{\cmt\subseteq \candidates,\,|\cmt|=k}
\sw_\N(\cmt,\uprofile)$
denote the optimal norm-based social welfare, realized by an optimal committee $\cmt^*$. 
The \emph{distortion} of a rule $f$ using thresholds $\tr$ is
\[
\distortion_\N(f,\tr)\coloneqq
\sup_\sprofile \sup_{\uprofile\triangleright \sprofile}
\frac{s_\N^*(\uprofile)}
{\displaystyle \expected_{\cmt\sim f(\sprofile)}[\sw_\N(\cmt,\uprofile)]}.
\]
The first supremum ranges over feasible input profiles with any number of voters, while the second ranges over utility profiles consistent with the given input profiles. For a deterministic rule, the denominator is simply the welfare of the selected committee. If the numerator is positive and the denominator is zero, the distortion is infinite.


\section{Upper Bounds}\label{sec:upper}
We now present our deterministic and randomized rules and bound their distortion. Both are based on virtual utilities obtained by replacing each reported utility interval with its lower endpoint. We first establish some useful properties of these virtual utilities for the geometric threshold family used throughout this section, together with a baseline welfare bound that will be used in the analysis of the rules.

\subsection{Virtual Utilities and Preliminary Bounds}
Fix $\delta>1$ and consider the thresholds $\tau_j\coloneqq\delta^{-j}$ for $j\in[t]$. 
Given an input profile $\sprofile$, we define the \emph{virtual utility} of voter $i$ for candidate $c$ by
\[
\ut'_i(c)\coloneqq
\begin{cases}
	\tau_j,&c\in \sprofile_{i,j},\ j\in[t];\\
	0,&c\in \sprofile_{i,t+1}.
\end{cases}
\]
Since each virtual utility is the lower endpoint of the corresponding reported utility interval, we have $\ut'_i(c)\le \ut_i(c)$ for every $c\in\candidates$.
Moreover, for any candidate $c \in \sprofile_{i,j}$ with $j \in [t]$, and hence $c\notin\sprofile_{i,t+1}$, we have that $\ut_i(c) \le \tau_{j-1} = \delta \tau_j = \delta \ut'_i(c)$. Consequently,
\[
\ut'_i(c)\ge\frac{\ut_i(c)}\delta.
\]
Thus, virtual utility never overestimates true utility, and any true utility of at least $\tau_t$ is underestimated by a factor of at most $\delta$. We emphasize that the virtual utilities need not satisfy the unit-sum assumption: they are quantities computed from the input profiles and do not necessarily form a feasible utility profile. 
We can now define the {\em virtual welfare} of a committee $\cmt$ as
\[
\vsw_\N(\cmt)\coloneqq\sum_{i\in \voters}\snorm{(\ut'_i)^{\cmt}}.
\]

Since every virtual utility is at most its corresponding true utility, monotonicity of the norm immediately gives the following relation.
\begin{lemma}\label{lem:under}
	For every size-$k$ committee $\cmt$, $\vsw_\N(\cmt)\le\sw_\N(\cmt)$.
\end{lemma}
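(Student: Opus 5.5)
The claim follows by comparing the two welfare sums voter by voter. We use only the monotonicity axiom of $\snorm{\cdot}$ and the fact, noted just before the statement, that $\ut'_i(c)\le\ut_i(c)$ for every voter $i$ and candidate $c$.

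Fix a size-$k$ committee $\cmt$ and a voter $i$. Both $(\ut'_i)^{\cmt}$ and $\ut_i^{\cmt}$ are non-negative vectors in $\R^m$. On the coordinates outside $\cmt$, both vectors are zero. On each coordinate $c\in\cmt$, they equal $\ut'_i(c)$ and $\ut_i(c)$ respectively. All virtual utilities are non-negative, since each is either some $\tau_j>0$ or $0$. Together with $\ut'_i(c)\le\ut_i(c)$, this gives the coordinate-wise inequality
\[
0\le(\ut'_i)^{\cmt}\le\ut_i^{\cmt}.
\]
The monotonicity axiom then yields $\snorm{(\ut'_i)^{\cmt}}\le\snorm{\ut_i^{\cmt}}=U_i^\N(\cmt)$.

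Summing this inequality over all voters $i\in\voters$ gives $\vsw_\N(\cmt)\le\sw_\N(\cmt)$, as claimed.

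There is no real obstacle. The only point to check is that monotonicity is stated for non-negative vectors, and the non-negativity of the virtual utilities ensures this hypothesis holds.
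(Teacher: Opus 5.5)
Your proof is correct and matches the paper's argument: the paper also derives the lemma directly from the coordinate-wise bound $\ut'_i(c)\le\ut_i(c)$ and monotonicity of the norm, applied to each voter and then summed. Your explicit check that the virtual utilities are non-negative, so that the monotonicity axiom (stated for non-negative vectors) applies, is a detail the paper leaves implicit.
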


We next lower-bound the welfare lost by replacing true utilities with their lower endpoints; this bound will later be applied to the optimal committee. For candidates outside the last approval set, virtual utility underestimates true utility by at most a factor $\delta$. Only utilities below $\tau_t$ disappear completely, and for each voter the norm of these discarded utilities is at most $\tau_t L_{\N,k}$.

\begin{lemma}\label{lem:loss}
	For every size-$k$ committee $\cmt$,
	\[
	\vsw_\N(\cmt)\ge\frac1\delta\bigl(\sw_\N(\cmt)-n\tau_tL_{\N,k}\bigr).
	\]
\end{lemma}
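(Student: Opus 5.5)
We argue voter by voter and then sum over $i\in\voters$. Fix a voter $i$ and the committee $\cmt$, and split the committee into $H\coloneqq\cmt\setminus\sprofile_{i,t+1}$ (candidates with true utility at least $\tau_t$) and $Z\coloneqq\cmt\cap\sprofile_{i,t+1}$ (candidates whose virtual utility is zero). Then $\ut_i^{\cmt}=\ut_i^{H}+\ut_i^{Z}$, and these two non-negative vectors have disjoint supports.

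First, we control the discarded part $\ut_i^{Z}$. Every coordinate of $\ut_i^{Z}$ is strictly below $\tau_t$ and is supported inside $\cmt$. Hence $0\le \ut_i^{Z}\le \tau_t\ones_{\cmt}$ coordinate-wise. By monotonicity and positive homogeneity, this gives $\snorm{\ut_i^{Z}}\le\tau_t\snorm{\ones_{\cmt}}=\tau_tL_{\N,k}$.

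Second, we control the retained part $\ut_i^{H}$. For each $c\in H$ we have $\ut_i(c)\le\delta\ut'_i(c)$, as established before \Cref{lem:under}. Virtual utilities vanish on $Z$, so $(\ut'_i)^{\cmt}=(\ut'_i)^{H}$. Thus $0\le\ut_i^{H}\le\delta(\ut'_i)^{\cmt}$ coordinate-wise, and monotonicity with homogeneity yields $\snorm{\ut_i^{H}}\le\delta\snorm{(\ut'_i)^{\cmt}}$.

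Combining the two parts with the triangle inequality, which applies since both vectors are non-negative, gives
\[
\snorm{\ut_i^{\cmt}}\le\snorm{\ut_i^{H}}+\snorm{\ut_i^{Z}}\le\delta\snorm{(\ut'_i)^{\cmt}}+\tau_tL_{\N,k}.
\]
Summing over the $n$ voters gives $\sw_\N(\cmt)\le\delta\,\vsw_\N(\cmt)+n\tau_tL_{\N,k}$. Rearranging and dividing by $\delta>0$ gives the claimed bound.

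The only step needing care is the bound on $\snorm{\ut_i^{Z}}$. There we dominate $\ut_i^{Z}$ by $\tau_t\ones_{\cmt}$ rather than by some smaller indicator vector, since the latter would depend on $|Z|$. Monotonicity makes this domination valid regardless of how many committee members fall into the last approval set.
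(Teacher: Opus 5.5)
Your proof is correct and follows essentially the same route as the paper: you split $\cmt$ into the last approval set and the rest, dominate the discarded part by $\tau_t\ones_{\cmt}$, bound the retained part by $\delta$ times the virtual norm, and combine the two with the triangle inequality. The only cosmetic difference is that you write the per-voter inequality as an upper bound on $\snorm{\ut_i^{\cmt}}$ rather than as a lower bound on $\snorm{(\ut'_i)^{\cmt}}$.
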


\begin{proof}
	Fix a voter $i$, and let
	\[
	A_i\coloneqq\cmt\cap\bigcup_{j=1}^t \sprofile_{i,j},
	\qquad B_i\coloneqq\cmt\cap \sprofile_{i,t+1}.
	\]
	For any $c\in A_i$, we have $\ut'_i(c)\ge \frac1\delta \ut_i(c)$. Since the virtual utilities vanish on $B_i$, monotonicity and homogeneity give
	\[
	\snorm{(\ut'_i)^{\cmt}}=\snorm{(\ut'_i)^{A_i}}
	\ge\frac1\delta\snorm{(\ut_i)^{A_i}}.
	\]
	Every coordinate of $(\ut_i)^{B_i}$ is below $\tau_t$, so $(\ut_i)^{B_i}\le\tau_t \ones_{\cmt}$ coordinate-wise.
	Monotonicity and homogeneity then imply $\snorm{(\ut_i)^{B_i}} \le \tau_t \snorm{\ones_{\cmt}} = \tau_t L_{\N,k}$. 
	Since $(\ut_i)^{\cmt}=(\ut_i)^{A_i}+(\ut_i)^{B_i}$, the triangle inequality implies
	\[
	\snorm{(\ut_i)^{A_i}}\ge\snorm{(\ut_i)^{\cmt}}-\snorm{(\ut_i)^{B_i}}
	\ge\snorm{(\ut_i)^{\cmt}}-\tau_tL_{\N,k}.
	\]
	Combining the inequalities yields
	\[
	\snorm{(\ut'_i)^{\cmt}}\ge\frac1\delta
	\biggl(\snorm{(\ut_i)^{\cmt}}-\tau_tL_{\N,k}\biggr).
	\]
	Summing over all voters proves the desired result.
\end{proof}

We next establish a baseline welfare bound that will be useful in both the deterministic and randomized analyses. We begin by comparing the norm of a non-negative vector with its $\ell_1$ norm. The following inequality follows by averaging permutations of a vector within $k$ coordinates and distributing its total sum equally across those coordinates.

\begin{lemma}\label{lem:average}
	For every non-negative vector $\mathbf{x} \in \R^m$ supported on at most $k$ coordinates,
	\[
	\snorm{\mathbf{x}}\ge\frac{L_{\N,k}}k\lVert \mathbf{x}\rVert_1.
	\]
\end{lemma}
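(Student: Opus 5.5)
Let $S\subseteq\candidates$ be a set of exactly $k$ coordinates containing the support of $\mathbf{x}$; enlarge the support arbitrarily if it has fewer than $k$ coordinates. The idea is to average $\mathbf{x}$ over all permutations of the coordinates in $S$, which spreads its total mass evenly over $S$, and then use convexity of the norm to compare $\mathbf{x}$ with this average.

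Let $\Pi$ be the set of all $k!$ permutation matrices that permute the coordinates in $S$ and fix all others. Define the average
\[
\bar{\mathbf{x}}\coloneqq\frac{1}{k!}\sum_{P\in\Pi}P\mathbf{x}.
\]
Each coordinate $c\in S$ of $\bar{\mathbf{x}}$ equals the average of the entries of $\mathbf{x}$ on $S$, namely $\lVert\mathbf{x}\rVert_1/k$, because $\mathbf{x}$ vanishes outside $S$. Coordinates outside $S$ remain zero. Hence
\[
\bar{\mathbf{x}}=\frac{\lVert\mathbf{x}\rVert_1}{k}\ones_S.
\]

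Now bound the norm of this average. Every $P\mathbf{x}$ is non-negative, so the triangle inequality for non-negative vectors (applied repeatedly), together with positive homogeneity and symmetry, gives
\[
\snorm{\bar{\mathbf{x}}}\le\frac{1}{k!}\sum_{P\in\Pi}\snorm{P\mathbf{x}}=\snorm{\mathbf{x}}.
\]
On the other side, positive homogeneity and the definition of $L_{\N,k}$ (which by symmetry does not depend on the choice of $S$) give
\[
\snorm{\bar{\mathbf{x}}}=\frac{\lVert\mathbf{x}\rVert_1}{k}\,\snorm{\ones_S}=\frac{L_{\N,k}}{k}\lVert\mathbf{x}\rVert_1.
\]
Combining the two displays yields the claimed inequality.

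There is no real obstacle. The only points needing care are that the triangle inequality is stated only for non-negative vectors, which holds here since every $P\mathbf{x}$ is non-negative, and the padding step when the support has fewer than $k$ coordinates. The case $\mathbf{x}=0$ is trivial.
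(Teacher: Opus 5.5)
Your proposal is correct and follows essentially the same argument as the paper. Like the paper, you average $\mathbf{x}$ over the $k!$ permutations within a size-$k$ set containing its support, identify the average as $\frac{\lVert\mathbf{x}\rVert_1}{k}\ones_S$, and bound its norm by $\snorm{\mathbf{x}}$ via symmetry, the triangle inequality, and positive homogeneity.
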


\begin{proof}
	Choose a size-$k$ set $\cmt$ containing the support of $\mathbf{x}$, and consider all $k!$ permutations of the coordinates of $\mathbf{x}$ within $\cmt$. By symmetry, every resulting vector has norm $\snorm{\mathbf{x}}$, and their average is
	\[
	\frac{\lVert\mathbf{x}\rVert_1}{k}\ones_{\cmt}.
	\]
	By the triangle inequality and positive homogeneity, the norm of this average is at most $\snorm{\mathbf{x}}$. Hence,
	\[
	\frac{\lVert\mathbf{x}\rVert_1}{k}\snorm{\ones_{\cmt}}
	\le \snorm{\mathbf{x}}.
	\]
	Since $\snorm{\ones_{\cmt}}=L_{\N,k}$, the result follows.
\end{proof}

Every candidate appears in a uniformly random size-$k$ committee with probability $\frac{k}{m}$. Together with \Cref{lem:average} and the unit-sum assumption, this yields the following report-independent baseline welfare guarantee.

\begin{corollary}\label{cor:random}
If $\cmt_r$ is a uniformly random size-$k$ committee, then
\[
    \expected[\sw_\N(\cmt_r)]\ge\frac{nL_{\N,k}}m.
\]
\end{corollary}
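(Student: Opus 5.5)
The plan is to work voter by voter, bound the expected committee utility of each voter from below, and then sum over the $n$ voters.

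Fix a voter $i$ and a uniformly random size-$k$ committee $\cmt_r$. The vector $\ut_i^{\cmt_r}$ is non-negative and supported on the $k$ coordinates of $\cmt_r$. \Cref{lem:average} therefore applies pointwise, for every realization of $\cmt_r$:
\[
\snorm{\ut_i^{\cmt_r}}\ge\frac{L_{\N,k}}{k}\lVert \ut_i^{\cmt_r}\rVert_1=\frac{L_{\N,k}}{k}\sum_{c\in\cmt_r}\ut_i(c).
\]

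Next I take expectations on both sides and use linearity. Each candidate lies in $\cmt_r$ with probability $\binom{m-1}{k-1}/\binom{m}{k}=k/m$. Hence
\[
\expected\Bigl[\sum_{c\in\cmt_r}\ut_i(c)\Bigr]=\frac km\sum_{c\in\candidates}\ut_i(c)=\frac km,
\]
where the last equality is the unit-sum assumption. Combining this with the pointwise bound gives
\[
\expected\bigl[U_i^\N(\cmt_r)\bigr]\ge\frac{L_{\N,k}}{k}\cdot\frac km=\frac{L_{\N,k}}m.
\]

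Finally, I sum over all $n$ voters and apply linearity of expectation to $\sw_\N(\cmt_r)=\sum_{i\in\voters}U_i^\N(\cmt_r)$. This yields $\expected[\sw_\N(\cmt_r)]\ge nL_{\N,k}/m$, as claimed.

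No step here is a real obstacle. The only point to check is that \Cref{lem:average} is used with a size-$k$ support set, which it allows since $\cmt_r$ contains the support of $\ut_i^{\cmt_r}$. It then gives the pointwise inequality before any averaging over the randomness of $\cmt_r$.
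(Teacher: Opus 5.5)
Your proposal is correct and matches the paper's proof: it applies \Cref{lem:average} pointwise to each $(\ut_i)^{\cmt_r}$, takes expectations using the inclusion probability $k/m$ and the unit-sum assumption, and sums over the $n$ voters.
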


\begin{proof}
	Applying \Cref{lem:average} to each utility vector $(\ut_i)^{\cmt_r}$, and using linearity of expectation and the unit-sum assumption, we obtain
	\begin{align*}
		\expected[\sw_\N(\cmt_r)]
		&\ge \frac{L_{\N,k}}{k}\sum_{i\in\voters}
		\expected\left[\left\|(\ut_i)^{\cmt_r}\right\|_1\right]\\
		&=\frac{L_{\N,k}}{k}\sum_{i\in\voters}\sum_{c\in\candidates}
		\Pr[c\in\cmt_r]\ut_i(c)\\
		&=\frac{L_{\N,k}}{k}\sum_{i\in\voters}\sum_{c\in\candidates}
		\frac{k}{m}\ut_i(c)
		=\frac{nL_{\N,k}}m. \qedhere
	\end{align*}
\end{proof}

Since the optimal committee has welfare at least the average welfare over all size-$k$ committees, \Cref{cor:random} implies the following lower bound on the optimal welfare:
\begin{equation}\label{eq:baseline}
	s_\N^*\ge\frac{nL_{\N,k}}m.
\end{equation}
Since $L_{\N,k}\ge1$ by \Cref{prop:LNK}, we have that $s_\N^*\ge n/m>0$. Hence, the optimal welfare is strictly positive for every utility profile.

The preceding corollary provides a lower bound on the expected welfare of a uniformly random committee that is independent of the utility profile. We next establish a complementary bound in terms of the optimal welfare.

\begin{lemma}\label{lem:uniform-to-optimum}
If $\cmt_r$ is a uniformly random size-$k$ committee, then
\[
    \expected[\sw_\N(\cmt_r)]\ge\frac{k}{m}s_\N^*.
\]
\end{lemma}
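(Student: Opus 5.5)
Fix an optimal committee $\cmt^*$ with $\sw_\N(\cmt^*)=s_\N^*$. The plan is to compare the random committee with $\cmt^*$ voter by voter, using only monotonicity, symmetry and the triangle inequality. The key observation is that $\cmt_r\cap\cmt^*$ is, conditioned on its size, a uniformly random subset of $\cmt^*$. Moreover, each candidate of $\cmt^*$ lies in $\cmt_r$ with probability exactly $k/m$. So it suffices to show, for each voter $i$, that
\[
\expected\bigl[\snorm{(\ut_i)^{\cmt_r\cap\cmt^*}}\bigr]\ge\frac km\snorm{(\ut_i)^{\cmt^*}}.
\]
Monotonicity then gives $\snorm{(\ut_i)^{\cmt_r}}\ge\snorm{(\ut_i)^{\cmt_r\cap\cmt^*}}$, and summing over voters yields the claim.

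To prove the per-voter inequality, I would use a \emph{convexity over random subsets} argument. Write $\mathbf{x}=(\ut_i)^{\cmt^*}$ and let $S=\cmt_r\cap\cmt^*$. Conditioned on $|S|=s$, the set $S$ is uniform over size-$s$ subsets of $\cmt^*$. Averaging $\mathbf{x}^S$ over all such subsets gives exactly $\frac sk\mathbf{x}$. By the triangle inequality and positive homogeneity, the norm of an average is at most the average of the norms, so
\[
\expected\bigl[\snorm{\mathbf{x}^S}\,\big|\,|S|=s\bigr]\ge\snorm{\tfrac sk\mathbf{x}}=\tfrac sk\snorm{\mathbf{x}}.
\]
Symmetry is not even needed here, only convexity. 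Taking expectation over $s$ gives $\expected[|S|]/k\cdot\snorm{\mathbf{x}}$. Since $\expected[|S|]=k\cdot\frac km$, this equals $\frac km\snorm{\mathbf{x}}$.

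A slicker alternative skips the conditioning entirely. We have $\expected[\mathbf{x}^{S}]=\frac km\mathbf{x}$ coordinate-wise, because each coordinate survives with probability $k/m$. Jensen's inequality for the convex function $\snorm{\cdot}$ then gives $\expected\snorm{\mathbf{x}^S}\ge\snorm{\expected\mathbf{x}^S}=\frac km\snorm{\mathbf{x}}$. I would present this version.

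The only step needing care is justifying Jensen for a norm over a finite distribution. This is simply the triangle inequality together with positive homogeneity applied to a convex combination, as already used in \Cref{lem:average}. Everything else is linearity of expectation and monotonicity, so I expect no real obstacle.
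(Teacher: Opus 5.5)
Your proposal is correct and is essentially the paper's argument: it combines Jensen's inequality for the norm (the triangle inequality plus positive homogeneity, applied to a random restriction whose expectation is $\frac{k}{m}$ times a fixed vector) with monotonicity. The only difference is the order of the two steps. The paper applies Jensen to $(\ut_i)^{\cmt_r}$ to get $\frac{k}{m}\snorm{\ut_i}$ and then uses monotonicity to pass to $\snorm{(\ut_i)^{\cmt^*}}$, whereas you first use monotonicity to restrict to $\cmt_r\cap\cmt^*$ and then apply Jensen.
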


\begin{proof}
Let $\cmt^*$ be an optimal committee, and fix a voter $i$. Since every candidate belongs to a uniformly random size-$k$ committee with probability $k/m$, we have
\[
    \expected\left[(\ut_i)^{\cmt_r}\right]=\frac{k}{m}\ut_i.
\]
By the triangle inequality and positive homogeneity,
\[
    \expected\left[\snorm{(\ut_i)^{\cmt_r}}\right]
    \ge \snorm{\expected\left[(\ut_i)^{\cmt_r}\right]}
    =\frac{k}{m}\snorm{\ut_i}.
\]
Moreover, since $(\ut_i)^{\cmt^*}\le\ut_i$ coordinate-wise, monotonicity implies $\snorm{\ut_i}\ge\snorm{(\ut_i)^{\cmt^*}}$.
Therefore,
\[
    \expected\left[\snorm{(\ut_i)^{\cmt_r}}\right] \ge\frac{k}{m}\snorm{(\ut_i)^{\cmt^*}}.
\]
Summing over all voters, we obtain
\[
    \expected[\sw_\N(\cmt_r)]
    \ge\frac{k}{m}\sw_\N(\cmt^*)
    =\frac{k}{m}s_\N^*. \qedhere
\]
\end{proof}

\subsection{Deterministic Rule}\label{sec:det}
Our deterministic rule selects a committee that maximizes social welfare with respect to the virtual utilities defined above, breaking ties arbitrarily but consistently.
\begin{algorithm}[H]
	\caption{$\fd_\delta$}
	\label{alg:det_rule}
	\begin{algorithmic}[1]
		\Require Input profile $\sprofile$, parameter $\delta>1$
		\State Set thresholds $\tau_j\gets\delta^{-j}$ for every $j\in[t]$
		\State Compute the virtual utilities $\ut_i'(c)$ from $\sprofile$ using these thresholds
		\State \Return a size-$k$ committee in ${\arg\max}_{\cmt'\subseteq\candidates,\,|\cmt'|=k}\vsw_\N(\cmt')$
	\end{algorithmic}
\end{algorithm}
\noindent 

We now bound the distortion. The parameters $k$ and $\N$ are fixed throughout, and are therefore omitted from notation. We consider the rule $\fd_\delta$ with final threshold $\tau_t=1/(2m)$. Equivalently, we choose $\delta$ so that $\delta^{-t}=1/(2m)$. By Equation~\eqref{eq:baseline}, the welfare lost by discarding utilities below the final threshold is then at most half of the optimal welfare.

\begin{theorem}\label{thm:det}
	For every $t\ge1$, $k\in[m]$, and every normalized monotone symmetric norm, let $\delta=(2m)^{1/t}$. Then, the distortion of $\fd_\delta$ is at most $2(2m)^{1/t}$.
\end{theorem}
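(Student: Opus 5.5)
The plan is a chain of three inequalities linking the selected committee $\cmt$ to an optimal committee $\cmt^*$: move from true welfare to virtual welfare at $\cmt$, use optimality of $\cmt$ for virtual welfare, then return to true welfare at $\cmt^*$ via \Cref{lem:loss}. The final threshold is absorbed using the baseline bound \eqref{eq:baseline}.

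Let $\cmt$ be the committee returned by $\fd_\delta$ with $\delta=(2m)^{1/t}$, so that $\tau_t=\delta^{-t}=1/(2m)$. The three steps are as follows.
\begin{itemize}
\item By \Cref{lem:under}, $\sw_\N(\cmt)\ge\vsw_\N(\cmt)$.
\item Since $\cmt$ maximizes virtual welfare among all size-$k$ committees, $\vsw_\N(\cmt)\ge\vsw_\N(\cmt^*)$.
\item Applying \Cref{lem:loss} to $\cmt^*$ gives
\[
\vsw_\N(\cmt^*)\ge\frac1\delta\Bigl(s_\N^*-\frac{nL_{\N,k}}{2m}\Bigr).
\]
\end{itemize}

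To control the subtracted term, use \eqref{eq:baseline}, which states $s_\N^*\ge nL_{\N,k}/m$. Hence $nL_{\N,k}/(2m)\le s_\N^*/2$, and the bracket is at least $s_\N^*/2$. Combining the three steps yields $\sw_\N(\cmt)\ge s_\N^*/(2\delta)$. Since $s_\N^*>0$, the ratio is well defined, and the distortion is at most $2\delta=2(2m)^{1/t}$.

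No step is a real obstacle, because the lemmas already isolate the two sources of loss. These are the factor-$\delta$ underestimation within threshold intervals and the utilities below $\tau_t$ that are discarded entirely. The only point needing care is the choice $\tau_t=1/(2m)$, which is calibrated against \eqref{eq:baseline} so that the discarded utilities cost at most half the optimum, independently of $k$ and the norm. The factor $L_{\N,k}$ appears on both sides of this comparison and cancels. The argument holds uniformly for every $k\in[m]$ and every normalized monotone symmetric norm, and tie-breaking in the $\arg\max$ is irrelevant.
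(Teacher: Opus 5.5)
Your proposal is correct and follows the paper's proof essentially step for step. Like the paper, you chain \Cref{lem:under}, the virtual-welfare optimality of the selected committee, and \Cref{lem:loss} applied to $\cmt^*$. You then absorb the $n\tau_t L_{\N,k}$ term via \eqref{eq:baseline} using $\tau_t=1/(2m)$, which gives $\sw_\N(\cmt)\ge s_\N^*/(2\delta)$.
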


\begin{proof}
	Fix an input profile $\sprofile$ and a utility profile $\uprofile\triangleright\sprofile$. Since $\tau_t=\delta^{-t}=1/(2m)$, Equation~\eqref{eq:baseline} gives
	\[
	n\tau_tL_{\N,k} = \frac{nL_{\N,k}}{2m}
	\le \frac{s_\N^*}{2}.
	\]
	Let $\cmt=\fd_\delta(\sprofile)$. By \Cref{lem:under,lem:loss},
	\[
	\sw_\N(\cmt)
	\ge \vsw_\N(\cmt)
	\ge \vsw_\N(\cmt^*)
	\ge \frac{1}{\delta}\bigl(s_\N^*-n\tau_tL_{\N,k}\bigr)
	\ge \frac{s_\N^*}{2\delta}.
	\]
	Hence, the distortion is at most $2\delta=2(2m)^{1/t}$.
\end{proof}

\subsection{Randomized Rule}\label{sec:rand}
Randomization improves the dependence on the number of candidates. The key idea is to mix the deterministic rule presented above, which exploits the reported information, with a uniformly random size-$k$ committee, which provides a report-independent baseline welfare guarantee.

\begin{algorithm}[H]
	\caption{$\fr_\delta$}
	\label{alg:rand_rule}
	\begin{algorithmic}[1]
		\Require Input profile $\sprofile$, parameter $\delta>1$
		\State Set thresholds $\tau_j\gets\delta^{-j}$ for every $j\in[t]$
		\State With probability $1/2$, choose a size-$k$ committee $\cmt$ uniformly at random
		\State With probability $1/2$, set $\cmt\gets\fd_\delta(\sprofile)$
		\State \Return $\cmt$
	\end{algorithmic}
\end{algorithm}

To obtain the improved distortion bound, we choose a larger final threshold than in the deterministic analysis. The two branches complement each other: when the optimal welfare is small, the uniformly random committee provides a sufficient baseline guarantee, while when it is large, replacing true utilities by their lower endpoints loses only a bounded fraction of the optimal welfare.

\begin{theorem}\label{thm:rand}
For every $t\ge1$, $k\in[m]$, and every normalized monotone symmetric norm, let $\delta=m^{1/(t+1)}$.
Then, the distortion of $\fr_\delta$ is at most
\[
	2\min\left\{m^{1/(t+1)},\frac{m}{k}\right\}.
\]
\end{theorem}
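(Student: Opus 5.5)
The plan is to bound the expected welfare of $\fr_\delta$ from below in two independent ways, one for each term of the minimum. Fix $\sprofile$ and $\uprofile\triangleright\sprofile$, write $s^*=s_\N^*$, let $\cmt_r$ be the uniformly random committee and $\cmt_d=\fd_\delta(\sprofile)$. Then
\[
\expected[\sw_\N(\cmt)]=\tfrac12\expected[\sw_\N(\cmt_r)]+\tfrac12\sw_\N(\cmt_d).
\]

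The $m/k$ term is immediate. Dropping the deterministic branch (welfare is non-negative) and applying \Cref{lem:uniform-to-optimum} gives $\expected[\sw_\N(\cmt)]\ge \frac{k}{2m}s^*$, hence distortion at most $2m/k$.

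For the $m^{1/(t+1)}$ term, I combine both branches. The deterministic branch is handled exactly as in the proof of \Cref{thm:det}, but without the choice $\tau_t=1/(2m)$. By \Cref{lem:under,lem:loss} and optimality of $\cmt_d$ for virtual welfare,
\[
\sw_\N(\cmt_d)\ge\vsw_\N(\cmt_d)\ge\vsw_\N(\cmt^*)\ge\frac1\delta\bigl(s^*-n\tau_tL_{\N,k}\bigr).
\]
The random branch is bounded by \Cref{cor:random}: $\expected[\sw_\N(\cmt_r)]\ge nL_{\N,k}/m$. Set $X\coloneqq nL_{\N,k}$. With $\delta=m^{1/(t+1)}$ we have $\tau_t=\delta^{-t}=m^{-t/(t+1)}=\delta/m$, and so
\[
2\,\expected[\sw_\N(\cmt)]\ge \frac{X}{m}+\frac{s^*}{\delta}-\frac{\tau_t X}{\delta}=\frac{X}{m}+\frac{s^*}{\delta}-\frac{X}{m}=\frac{s^*}{\delta}.
\]
The loss term cancels exactly against the baseline term, so no case split on the size of $s^*$ is needed. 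This yields distortion at most $2\delta=2m^{1/(t+1)}$.

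Taking the better of the two bounds gives $2\min\{m^{1/(t+1)},m/k\}$. The only point needing care is that the threshold choice is what makes $\tau_t/\delta=1/m$ match the $1/m$ factor in \Cref{cor:random}. If the loss term exceeds $s^*$, the bound from \Cref{lem:loss} is negative but still valid as a lower bound, and the sum argument goes through unchanged. The distortion is well defined because $s^*>0$ by \eqref{eq:baseline}.
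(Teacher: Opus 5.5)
Your proposal is correct and matches the paper's proof essentially step for step. Like the paper, you bound the deterministic branch via \Cref{lem:under,lem:loss}, use the exact cancellation $\tau_t/\delta=1/m$ against the baseline of \Cref{cor:random} to get the $2m^{1/(t+1)}$ bound, and obtain the $2m/k$ bound from \Cref{lem:uniform-to-optimum} applied to the uniform branch alone.
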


\begin{proof}
    Fix an input profile $\sprofile$ and a utility profile
    $\uprofile\triangleright\sprofile$. Let
    $\cmt_d=\fd_\delta(\sprofile)$, let $\cmt_r$ be a uniformly
    random size-$k$ committee, and let $\cmt=\fr_\delta(\sprofile)$.
    As in the proof of \Cref{thm:det}, we have
    \[
        \sw_\N(\cmt_d)\ge
        \frac{1}{\delta}(s_\N^*-n\tau_tL_{\N,k}).
    \]
    Thus, by \Cref{cor:random},
    \[
    \begin{aligned}
        \expected[\sw_\N(\cmt)]
        &=\frac12\sw_\N(\cmt_d)
        +\frac12\expected[\sw_\N(\cmt_r)]\\
        &\ge\frac{1}{2\delta}
        (s_\N^*-n\tau_tL_{\N,k})
        +\frac{nL_{\N,k}}{2m}\\
        &=\frac{s_\N^*}{2\delta},
    \end{aligned}
    \]
    where the last equality follows from $\tau_t/\delta=\delta^{-(t+1)}=1/m$, which makes the two terms involving $nL_{\N,k}$ cancel.
    
    Moreover, by \Cref{lem:uniform-to-optimum},
    \[
        \expected[\sw_\N(\cmt)]
        \ge\frac12\expected[\sw_\N(\cmt_r)]
        \ge\frac{k}{2m}s_\N^*.
    \]
    Combining the two bounds, we have that the distortion of $\fr_\delta$ is at most $2\min\{m^{1/(t+1)},m/k\}$.
\end{proof}

Note that \Cref{thm:det,thm:rand} hold for all utility functions introduced in \Cref{sec:5norm-def}, including additive, best-representative, top-$r$, $\ell_p$, and normalized non-increasing OWA utility. The chosen norm affects which committee maximizes virtual welfare, and in some cases finding such a committee may be computationally hard, but the distortion analysis remains unchanged.

\subsection{Consequences and Special Cases}
The number of thresholds captures the tradeoff between the amount of elicited information and distortion. At one extreme, a single threshold yields deterministic distortion at most $4m$ and randomized distortion at most $2\sqrt m$. At the other, $t=\Theta(\log m)$ thresholds already suffice to achieve constant distortion for both rules.

\begin{corollary}
	For every normalized monotone symmetric norm, one threshold suffices for deterministic distortion at most $4m$ and randomized distortion at most $2\sqrt m$. Choosing $t=\lceil\log_2m\rceil$ gives deterministic distortion at most $8$ and randomized distortion at most $4$.
\end{corollary}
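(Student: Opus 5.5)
The corollary should follow by plugging specific values of $t$ into \Cref{thm:det} and \Cref{thm:rand}; no new analysis is needed.

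For a single threshold, I set $t=1$. \Cref{thm:det} with $\delta=2m$ then gives deterministic distortion at most $2(2m)^{1/1}=4m$. \Cref{thm:rand} with $\delta=m^{1/2}$ gives randomized distortion at most $2\min\{\sqrt m, m/k\}\le 2\sqrt m$.

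For $t=\lceil\log_2 m\rceil$, the deterministic bound is $2(2m)^{1/t}=2\cdot 2^{1/t}\cdot m^{1/t}$. Since $t\ge\log_2 m$, we have $m^{1/t}=2^{(\log_2 m)/t}\le 2$. Since $t\ge1$, we also have $2^{1/t}\le 2$. Together these give a bound of at most $2\cdot2\cdot2=8$. This requires $t\ge1$, which holds because $m\ge2$ implies $\lceil\log_2 m\rceil\ge1$. For the randomized rule, \Cref{thm:rand} gives at most $2m^{1/(t+1)}\le 2m^{1/t}\le 2\cdot 2=4$, using the same estimate $m^{1/t}\le2$.

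The only mildly delicate point is the exponent manipulation $m^{1/\lceil\log_2 m\rceil}\le 2$, which comes down to the monotonicity of $x\mapsto m^{1/x}$ for $m\ge1$. One should also check that the theorems apply for every $k\in[m]$ and every normalized monotone symmetric norm, which they do as stated.
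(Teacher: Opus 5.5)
Your proposal is correct and matches the paper's argument: substitute $t=1$ into \Cref{thm:det,thm:rand}, and for $t=\lceil\log_2 m\rceil$ use $m\le 2^t$ to get $(2m)^{1/t}\le 4$ and $m^{1/(t+1)}\le 2$. Your factorization $2^{1/t}m^{1/t}\le 2\cdot 2$ and the step $m^{1/(t+1)}\le m^{1/t}$ just spell out the same estimates more explicitly.
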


The single-threshold bounds follow directly from \Cref{thm:det,thm:rand} with $t=1$. If $t=\lceil\log_2m\rceil$, then $m\le2^t$, so $(2m)^{1/t}\le4$, giving deterministic distortion at most $8$, while $m^{1/(t+1)}\le2$, giving randomized distortion at most $4$. More generally, $t=\Theta(\log m)$ yields constant distortion for both rules.

\section{Lower Bounds}\label{sec:lower}
We now prove lower bounds for arbitrary committee sizes. The deterministic and randomized arguments are both based on the same dichotomy. If the final threshold is too large, the last approval set can hide substantial utility. Otherwise, the threshold sequence spans a sufficiently large multiplicative range that some pair of consecutive thresholds must contain a large gap. 

\subsection{Deterministic Rules}

{\bf The last-threshold obstruction.}
If every voter assigns utility below the last threshold to every candidate, then all candidates are indistinguishable in the reports, and a deterministic rule may select a committee with arbitrarily small welfare. The following construction makes this explicit.

\begin{lemma}\label{lem:unbounded}
	For $1\le k<m$ and any normalized monotone symmetric norm $\snorm{\cdot}$, if $\tau_t>1/(m-k)$, then every deterministic size-$k$ committee rule using $\tr$ has unbounded distortion.
\end{lemma}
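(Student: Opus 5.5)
The plan is to build an input profile in which every voter places every candidate in the last approval set $\sprofile_{i,t+1}$. On that profile the rule sees nothing that distinguishes candidates, and its output committee can be exploited. Because the rule is deterministic, it maps this profile to a fixed committee $\cmt$ with $|\cmt|=k$. Since $k<m$, the set $\candidates\setminus\cmt$ contains $m-k\ge1$ candidates.

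Next I construct a consistent utility profile that puts almost all utility outside $\cmt$. Fix a small $\varepsilon\ge 0$. Every voter assigns utility $\varepsilon/k$ to each candidate in $\cmt$, and spreads the remaining $1-\varepsilon$ equally over $\candidates\setminus\cmt$, so each such candidate receives $(1-\varepsilon)/(m-k)$. The report requires every utility to be strictly below $\tau_t$. This holds for the candidates outside $\cmt$ because $(1-\varepsilon)/(m-k)\le 1/(m-k)<\tau_t$, which is exactly the hypothesis. It holds for the candidates in $\cmt$ once $\varepsilon/k<\tau_t$. So this unit-sum profile induces the all-last-set input profile $\sprofile$.

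The simplest choice is $\varepsilon=0$: then $\ut_i^{\cmt}=0$ and the zero-benefit property gives $\sw_\N(\cmt)=0$. On the other hand, the optimum is strictly positive by \eqref{eq:baseline}; more directly, any committee containing some $c\notin\cmt$ gives each voter at least $\snorm{\ut_i(c)\mathbf{e}_c}=(1-\varepsilon)/(m-k)>0$ by monotonicity and normalization. Since the numerator is positive and the denominator zero, the distortion is infinite by definition, which proves the claim.

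If one prefers to avoid zero utilities, take $\varepsilon>0$ instead. Then \Cref{prop:LNK} and monotonicity bound the welfare of $\cmt$ by $n(\varepsilon/k)L_{\N,k}\le n\varepsilon$, while the optimum stays at least $n(1-\varepsilon)/(m-k)$. Letting $\varepsilon\to0$ sends the ratio to infinity. The only point needing care is the strict inequality in the definition of $\sprofile_{i,t+1}$, and that is exactly what the strict hypothesis $\tau_t>1/(m-k)$ provides; I do not expect any real obstacle.
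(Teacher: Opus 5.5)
Your proof is correct and uses the paper's construction: report every candidate in the last approval set, take the rule's committee $\cmt$, and put nearly all utility on the $m-k$ candidates outside $\cmt$ at level about $1/(m-k)<\tau_t$. The paper uses $1/(m-k)-\varepsilon$ outside and $(m-k)\varepsilon/k$ inside and lets $\varepsilon\to0$. Your $\varepsilon=0$ shortcut is also valid under the paper's convention that a positive numerator over a zero denominator gives infinite distortion. The one step you leave implicit is that the all-last-set profile is feasible, which the paper checks with the uniform vector; your utility construction applied to any fixed $k$-set already shows this.
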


\begin{proof}
	Consider the input profile in which $\sprofile_{i,t+1}=\candidates$ for every voter $i$. 
	This profile is feasible: since $\tau_t>1/(m-k)>1/m$, the uniform utility vector assigns utility below $\tau_t$ to every candidate. Let $\cmt$ be the committee selected by the rule.
	Choose
	\[
	0<\varepsilon<\min\left\{\frac1{m-k},\frac{k\tau_t}{m-k}\right\},
	\]
	and assign every voter utilities
	\[
	\ut_i(c)=
	\begin{cases}
		\dfrac1{m-k}-\varepsilon,&\text{ if }  c\notin \cmt,\\[10pt]
		\dfrac{m-k}{k}\varepsilon,&\text{ if }  c\in \cmt.
	\end{cases}
	\]
	These utilities are non-negative, and satisfy the unit-sum assumption because
	\[
	(m-k)\left(\frac1{m-k}-\varepsilon\right)
	+k\left(\frac{m-k}{k}\varepsilon\right)=1.
	\]
	Moreover, every utility is below $\tau_t$: for candidates outside $\cmt$ this follows from $\tau_t>1/(m-k)$, while for candidates in $\cmt$ it follows from the choice of $\varepsilon$. Hence, this utility profile induces the specified input profile.
	
	The welfare of the selected committee is
	\[
	\sw_\N(\cmt)=nL_{\N,k}\frac{m-k}{k}\varepsilon.
	\]
	Now choose any candidate $c\notin \cmt$ and any size-$k$ committee $\widehat \cmt$ containing $c$. By monotonicity and normalization, the optimal welfare is
	\[
	s_\N^*\ge\sw_\N(\widehat \cmt)\ge
	n\left(\frac1{m-k}-\varepsilon\right).
	\]
	Consequently, the distortion is at least
	\[
	\frac{\frac1{m-k}-\varepsilon}
	{L_{\N,k}\frac{m-k}{k}\varepsilon},
	\]
	which tends to infinity as $\varepsilon > 0$ tends to zero.
\end{proof}

\medskip
\noindent 
{\bf Large gaps between consecutive thresholds.}
We now consider the remaining case where $\tau_t\le 1/(m-k)$.
Set
\begin{align*}
	\sigma_j\coloneqq\max\left\{\tau_j,\frac1{m-k}\right\}
	\quad\text{for }j=0,\ldots,t
\end{align*}
and 
\begin{align*}
	R_j\coloneqq\frac{\sigma_{j-1}}{\sigma_j}
	\quad\text{for }j\in[t].
\end{align*}
The values $\sigma_j$ are auxiliary and do not modify the thresholds used by the rule. In the case $\tau_t\le 1/(m-k)$, we have $\sigma_0=1$ and $\sigma_t=1/(m-k)$, and therefore
\[
\prod_{j=1}^t R_j = \prod_{j=1}^t \frac{\sigma_{j-1}}{\sigma_j} = \frac{\sigma_0}{\sigma_t} = m-k.
\]
A large value of $R_j$ identifies an approval set in which utilities can vary substantially without changing the report. For committees, however, the rule may select up to $k$ candidates from this set. We therefore choose the high utility so that, under the unit-sum assumption, at least $k$ additional report-indistinguishable candidates remain.

\begin{lemma}\label{lem:gap-committee}
	For $1\le k<m$ and any normalized monotone symmetric norm $\snorm{\cdot}$, fix a threshold vector $\tr$ and $j\in[t]$. If $R_j\ge k+1$, 
	then the distortion of every deterministic size-$k$ committee rule $f$ using $\tr$ is
	\[
	\distortion_\N(f,\tr) \ge \frac{R_j-k}{L_{\N,k}}.
	\]
\end{lemma}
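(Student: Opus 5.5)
The plan is to build a single input profile in which $k+1$ candidates share approval set $j$. The rule can select at most $k$ of them, so at least one of these candidates is left out. After seeing the rule's choice, we give that excluded candidate a utility near the top of the interval $[\tau_j,\tau_{j-1})$, and give the others utility near the bottom. Because the reports do not change, the rule's committee stays the same. Concretely, I will aim for a high value $h\approx\sigma_{j-1}-k\sigma_j=(R_j-k)\sigma_j$ and a low value $\sigma_j$. Then the $k+1$ candidates together carry total utility about $\sigma_{j-1}\le 1$, and the ratio of high to low value is about $R_j-k$.

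\textbf{Construction.} Fix a set $S$ of $k+1$ candidates and let $T=\candidates\setminus S$, so $|T|=m-k-1$. Every voter reports the same sets. All of $S$ lies in $\sprofile_{i,j}$. Each candidate of $T$ is placed in whatever approval set is induced by the fixed value
\[
u_T\coloneqq\frac{1-\sigma_{j-1}+\varepsilon'}{m-k-1},
\]
where $\varepsilon'$ is a small correction that keeps the total at $1$. If $T=\emptyset$ (that is, $m=k+1$), we instead set $\sigma_{j-1}=1$ exactly, which holds because then $1/(m-k)=1$.

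Let $\cmt$ be the committee the rule outputs and pick $c\in S\setminus\cmt$. Set
\[
\ut_i(c)=h\coloneqq\sigma_{j-1}-k\sigma_j-\varepsilon, \qquad \ut_i(c')=\sigma_j \ \text{ for } c'\in S\setminus\{c\}.
\]
The values of $T$ stay as above, adjusted so the total is $1$. This profile is consistent with the reports for every choice of $c$, so the rule must still output $\cmt$.

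\textbf{Feasibility checks.}
\begin{itemize}
\item The low value lies in set $j$. Clearly $\sigma_j\ge\tau_j$. For $\sigma_j<\tau_{j-1}$: since $R_j\ge k+1>1$ we have $\sigma_{j-1}>\sigma_j\ge 1/(m-k)$, which forces $\sigma_{j-1}=\tau_{j-1}$, and hence $\sigma_j<\tau_{j-1}$.
\item The high value lies in set $j$. We have $h<\sigma_{j-1}=\tau_{j-1}$ strictly, and $h\ge\sigma_j\ge\tau_j$ because $R_j-k\ge1$, up to the $\varepsilon$ slack.
\item The values on $T$ are non-negative, because $\sigma_{j-1}\le 1$.
\item Each value on $T$ is at most $\sigma_j$. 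Using $\sigma_{j-1}\ge 1/(m-k)$,
\[
\frac{1-\sigma_{j-1}}{m-k-1}\le\frac{1-1/(m-k)}{m-k-1}=\frac1{m-k}\le\sigma_j .
\]
\end{itemize}

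\textbf{Distortion bound.} Every candidate other than $c$ has utility at most $\sigma_j$ (up to $\varepsilon$ terms). So $(\ut_i)^{\cmt}\le\sigma_j\ones_{\cmt}$ coordinatewise, and monotonicity with homogeneity gives
\[
\sw_\N(\cmt)\le n\sigma_jL_{\N,k}.
\]
Any committee containing $c$ has welfare at least $nh$, by monotonicity and normalization, so $s_\N^*\ge nh$. The ratio is therefore at least
\[
\frac{\sigma_{j-1}-k\sigma_j-\varepsilon}{\sigma_jL_{\N,k}}.
\]
Letting $\varepsilon\to0$ and using the definition of the distortion as a supremum yields $(R_j-k)/L_{\N,k}$.

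\textbf{Main obstacle.} The delicate step is bookkeeping the $\varepsilon$-perturbations. The top value must stay strictly below $\tau_{j-1}$, the unit-sum identity must hold exactly, and the $T$-values must be fixed independently of $c$. The last requirement makes the reports identical; it holds because $T$ and its values never depend on which member of $S$ becomes $c$.
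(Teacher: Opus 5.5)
Your construction is the one the paper uses. You put $k+1$ candidates in approval set $j$ and give the excluded one $(R_j-k)\sigma_j=\tau_{j-1}-k\sigma_j$ and the other $k$ the value $\sigma_j$. You spread $1-\tau_{j-1}$ evenly over the remaining $m-k-1$ candidates, and you compare $n(R_j-k)\sigma_j$ against $n\sigma_jL_{\N,k}$ exactly as the paper does. Your bound on the filler value via $\sigma_{j-1}\ge 1/(m-k)$ is correct, and slightly cleaner than the paper's.

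The step that does not work as written is the $\varepsilon$-perturbation, which you call the main obstacle but which is unnecessary and slightly harmful.
\begin{itemize}
\item \textbf{No perturbation is needed.} The exact value $h=\sigma_{j-1}-k\sigma_j$ is already strictly below $\tau_{j-1}=\sigma_{j-1}$, because $k\sigma_j>0$. It is also at least $\sigma_j\ge\tau_j$, because $R_j-k\ge1$. So all utilities can be set exactly, the sum is exactly $1$ with filler $(1-\tau_{j-1})/(m-k-1)$, and no limiting argument is required.
\item \textbf{The perturbation breaks a boundary case.} If $R_j=k+1$ and $\sigma_j=\tau_j$, your choice gives $h=\tau_j-\varepsilon<\tau_j$. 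Then $c$ is reported in approval set $j+1$, the utility profile no longer induces the input profile the rule was given, and the adversary argument collapses. The claimed bound there is only $1/L_{\N,k}\le1$, which is trivial, but your argument as written does not cover this case. Setting $\varepsilon=0$ fixes it.
\item \textbf{The case $T=\emptyset$ cannot occur.} You do not need to treat it separately, and ``setting'' $\sigma_{j-1}=1$ is not meaningful, since $\sigma_{j-1}$ is determined by $\tr$. From $\sigma_{j-1}\le1$ and $\sigma_j\ge1/(m-k)$ you get $R_j\le m-k$. So $R_j\ge k+1$ forces $m\ge 2k+1$, and hence $|T|=m-k-1\ge k\ge1$.
\end{itemize}
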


\begin{proof}
	Fix $j\in[t]$ with $R_j\ge k+1$. 
	Since $\tau_{j-1}\le1$ and $\sigma_j\ge1/(m-k)$,
	\[
	R_j=\frac{\sigma_{j-1}}{\sigma_j}\le m-k.
	\]
	Hence $m-k\ge k+1$ or, equivalently, $m \geq 2k+1$. 
	
	Since $R_j>1$, we have $\sigma_{j-1}>\sigma_j\ge1/(m-k)$, and hence $\sigma_{j-1}=\tau_{j-1}$.
	Therefore,
	\[
	R_j=\frac{\tau_{j-1}}{\sigma_j}.
	\]
	This directly implies that 
	$$(R_j-k)\sigma_j = \tau_{j-1}-k\sigma_j < \tau_{j-1}.$$
	Since $R_j\ge k+1$, we also have that 
	$$(R_j-k)\sigma_j\ge\sigma_j.$$ 
	Using the fact that $\sigma_j\ge\tau_j$ and $\sigma_j<\tau_{j-1}$, we obtain
	\[
	\tau_j \le \sigma_j \le (R_j-k)\sigma_j < \tau_{j-1}.
	\]
	Therefore, both utility values $\sigma_j$ and $(R_j-k)\sigma_j$ lie in the interval corresponding to approval set $j$.
	
	Choose a set $A\subseteq\candidates$ of candidates with $|A|=k+1$. 
	Consider the input profile $\sprofile$ in which, for every voter, all candidates in $A$ are reported in approval set $j$, while every candidate outside $A$ is reported in the approval set corresponding to the utility value
	\[
	\frac{1-\tau_{j-1}}{m-k-1}.
	\]
	Let $\cmt=f(\sprofile)$. Since $|A|=k+1$, there exists a candidate $c^*\in A\setminus\cmt$.
	Now assign every voter utilities
	\[
	\ut_i(c)=
	\begin{cases}
		(R_j-k)\sigma_j,&\text{ if } c=c^*,\\
		\sigma_j,& \text{ if } c\in A\setminus\{c^*\},\\
		\dfrac{1-\tau_{j-1}}{m-k-1},& \text{ if } c\notin A.
	\end{cases}
	\]
	These utilities satisfy the unit-sum assumption because
	\[
	(R_j-k)\sigma_j +k\sigma_j +(m-k-1)\frac{1-\tau_{j-1}}{m-k-1}
	= R_j\sigma_j+1-\tau_{j-1} = 1.
	\]
	Furthermore,
	\[
	\frac{1-\tau_{j-1}}{m-k-1} \le \sigma_j,
	\]
	since
	\[
	1-\tau_{j-1} = 1-R_j\sigma_j \le (m-k-1)\sigma_j,
	\]
	where the last inequality follows from $(m-k)\sigma_j\ge1$ and $R_j\ge1$. Hence, every candidate other than $c^*$ has utility at most $\sigma_j$. The constructed utility profile also induces exactly $\sprofile$, since changing the utility of $c^*$ from $\sigma_j$ to $(R_j-k)\sigma_j$ leaves it in approval set $j$.
	
	Since $c^*\notin\cmt$, every candidate selected by $\cmt$ has utility at most $\sigma_j$ for every voter. Hence, coordinate-wise,
	$
	\ut_i^{\cmt}\le\sigma_j\ones_{\cmt}.
	$
	By monotonicity and positive homogeneity, we obtain
	\[
	\sw_\N(\cmt)\le n\sigma_jL_{\N,k}.
	\]
	On the other hand, any size-$k$ committee $\widehat\cmt$ containing $c^*$ satisfies
	$
	\ut_i^{\widehat\cmt} \ge (R_j-k)\sigma_j\mathbf{e}_{c^*}.
	$
	By monotonicity and normalization, we have
	\[
	s_\N^* \ge \sw_\N(\widehat\cmt) \ge n(R_j-k)\sigma_j.
	\]
	Consequently,
	\[
	\distortion_\N(f,\tr)
	\ge \frac{n(R_j-k)\sigma_j}{n\sigma_jL_{\N,k}}
	= \frac{R_j-k}{L_{\N,k}}.
	\]
	The proof is now complete.
\end{proof}

We can now combine the last-threshold obstruction with the large-gap bound to obtain the deterministic lower bound.

\begin{theorem}\label{thm:det-committee-lower}
	For $1\le k<m$, $t\ge1$, and any normalized monotone symmetric norm $\snorm{\cdot}$, the distortion of every deterministic size-$k$ committee rule $f$ using an arbitrary threshold vector $\tr$ is
	\[
	\distortion_\N(f,\tr) \ge \max\left\{1,\, \frac{(m-k)^{1/t}-k}{L_{\N,k}} \right\}.
	\]
\end{theorem}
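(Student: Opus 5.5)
The bound of $1$ needs no work: distortion is always at least $1$, since the optimal committee is at least as good as any committee the rule can select (and $s_\N^*>0$ by Equation~\eqref{eq:baseline}). So the content of the theorem is the second term. I will only need to handle the case $(m-k)^{1/t}-k > L_{\N,k}$, which in particular requires $(m-k)^{1/t}>k+1$, since otherwise the second term is at most $1$.

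I split on the last threshold.

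\emph{Case $\tau_t>1/(m-k)$.} Here \Cref{lem:unbounded} gives unbounded distortion, and every claimed bound holds trivially.

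\emph{Case $\tau_t\le 1/(m-k)$.} I use the auxiliary quantities $\sigma_j$ and $R_j$ defined before \Cref{lem:gap-committee}, which satisfy $\prod_{j=1}^t R_j=m-k$. Each $R_j\ge1$, because $\tau_j$ is decreasing in $j$ and hence so is $\sigma_j$. Since the $t$ factors multiply to $m-k$, their maximum is at least the geometric mean. So some index $j$ has $R_j\ge (m-k)^{1/t}$.

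Now suppose the second term exceeds $1$. Then $R_j\ge (m-k)^{1/t}> k+L_{\N,k}\ge k+1$, using $L_{\N,k}\ge1$ from \Cref{prop:LNK}. The hypothesis of \Cref{lem:gap-committee} therefore holds. That lemma gives distortion at least $(R_j-k)/L_{\N,k}$, which is at least $((m-k)^{1/t}-k)/L_{\N,k}$ because the expression is monotone in $R_j$. If instead the second term is at most $1$, the trivial bound of $1$ already suffices. Combining the two cases gives the maximum.

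The main subtlety is checking that the gap lemma applies exactly when it is needed, namely that $R_j\ge k+1$ whenever the second term is nontrivial. The argument above settles this through $L_{\N,k}\ge1$. Beyond that, the only care needed is the boundary behaviour of the $\sigma_j$ truncation, which the definitions already handle.
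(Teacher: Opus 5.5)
Your proof is correct and follows essentially the same route as the paper: you dispose of $\tau_t>1/(m-k)$ via \Cref{lem:unbounded}, then use $\prod_{j=1}^t R_j=m-k$ to find an index with $R_j\ge(m-k)^{1/t}$ and invoke \Cref{lem:gap-committee}. The only cosmetic difference is the case split: you split on whether the second term exceeds $1$, using $L_{\N,k}\ge1$ to obtain $R_j\ge k+1$, whereas the paper splits directly on whether $(m-k)^{1/t}\ge k+1$.
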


\begin{proof}
	If $\tau_t>1/(m-k)$, \Cref{lem:unbounded} gives unbounded distortion, so it suffices to consider the case $\tau_t\le1/(m-k)$.
	
	Recall that $\sigma_j=\max\{\tau_j,1/(m-k)\}$ and $R_j=\sigma_{j-1}/\sigma_j$. Since $\tau_t\le1/(m-k)$, we have $\sigma_0=1$ and $\sigma_t=1/(m-k)$, and therefore
	\[
	\prod_{j=1}^t R_j = \frac{\sigma_0}{\sigma_t} = m-k.
	\]
	Hence, there exists $j^*\in[t]$ such that $R_{j^*}\ge(m-k)^{1/t}$.
	We now use \Cref{lem:gap-committee}, which gives a lower bound whenever $R_{j^*}\ge k+1$. 
	\begin{itemize}
		\item 
		If $(m-k)^{1/t}\ge k+1$, then indeed $R_{j^*}\ge k+1$, and thus
		\[
		\distortion_\N(f,\tr) \ge \frac{R_{j^*}-k}{L_{\N,k}} \ge \frac{(m-k)^{1/t}-k}{L_{\N,k}}.
		\]
		
		\item If $(m-k)^{1/t}<k+1$, then
		\[
		\frac{(m-k)^{1/t}-k}{L_{\N,k}} < \frac1{L_{\N,k}} \le 1.
		\]
	\end{itemize}
	Combining the two cases proves the claim.
\end{proof}

To put the lower bound into perspective and compare it with the upper bound from the previous section, we next specialize \Cref{thm:det-committee-lower} to the utility functions introduced earlier and identify cases where the resulting bounds are tight up to constant factors.
Substituting $L_{\ell_\infty,k}=1$, $L_{\ell_1,k}=k$ and $L_{\ell_p,k}=k^{1/p}$ into \Cref{thm:det-committee-lower} gives the following bounds.

\begin{corollary}\label{cor:det-committee-lower-norms}
	Let $1\le k<m$ and $t\ge1$. The distortion of every deterministic size-$k$ committee rule $f$ using an arbitrary threshold vector $\tr$ satisfies the following bounds:
	\begin{enumerate}
		\item \textbf{Best-representative utility:}
		\[
		\distortion_{\ell_\infty}(f,\tr)
		\ge
		\max\left\{1,(m-k)^{1/t}-k\right\}.
		\]
		
		\item \textbf{Additive utility:}
		\[
		\distortion_{\ell_1}(f,\tr)
		\ge
		\max\left\{
		1,
		\frac{(m-k)^{1/t}-k}{k}
		\right\}.
		\]
		
		\item \textbf{$\ell_p$ utility with $p \geq 2$:}
		\[
		\distortion_{\ell_p}(f,\tr)
		\ge
		\max\left\{
		1,
		\frac{(m-k)^{1/t}-k}{k^{1/p}}
		\right\}.
		\]
	\end{enumerate}
\end{corollary}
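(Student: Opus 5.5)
This corollary is a direct specialization of \Cref{thm:det-committee-lower}, so the plan is to substitute the value of $L_{\N,k}$ for each norm into the bound
\[
\max\left\{1,\frac{(m-k)^{1/t}-k}{L_{\N,k}}\right\}.
\]
Before doing so, I would check that each utility function is a normalized monotone symmetric norm on $\R^m$, so that the theorem applies. This was noted in the model section. For $\ell_p$, positive definiteness, homogeneity and symmetry are immediate, monotonicity on non-negative vectors follows coordinate-wise, the triangle inequality is Minkowski's inequality, and $\lVert\mathbf{e}_c\rVert_p=1$.

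Next I would compute $L_{\N,k}=\snorm{\ones_{\cmt}}$ for $|\cmt|=k$ in each case. For $\ell_\infty$, the largest coordinate of $\ones_{\cmt}$ is $1$, so $L_{\ell_\infty,k}=1$. For $\ell_1$, the coordinates sum to $k$, so $L_{\ell_1,k}=k$. For $\ell_p$, we get $(\sum_{c\in\cmt}1^p)^{1/p}=k^{1/p}$. Plugging these three values into \Cref{thm:det-committee-lower} gives items 1, 2 and 3 exactly. The restriction $p\ge2$ in item 3 is not needed for the argument, since the bound holds for every $p\ge1$; I would simply remark that the stated range is a special case.

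There is no real obstacle here. The only point needing care is that the theorem's $\max$ with $1$ carries over unchanged and that the hypotheses $1\le k<m$ and $t\ge1$ match those of \Cref{thm:det-committee-lower}. Both are immediate.
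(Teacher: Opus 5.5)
Your proposal is correct and matches the paper's argument: the corollary is obtained by substituting $L_{\ell_\infty,k}=1$, $L_{\ell_1,k}=k$, and $L_{\ell_p,k}=k^{1/p}$ into \Cref{thm:det-committee-lower}. Your side remark that item 3 holds for every $p\ge1$, not only $p\ge2$, is also correct.
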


In the single-winner case $k=1$, and more generally for any constant committee size, the lower bound of \Cref{thm:det-committee-lower} matches the upper bound of \Cref{thm:det} up to a constant factor. Hence, the optimal deterministic distortion is $\Theta(m^{1/t})$ in these cases.

\begin{corollary}\label{cor:constant-k-tight}
	For any fixed $k=O(1)$ and normalized monotone symmetric norm, the optimal deterministic distortion with $t$ thresholds is $\Theta(m^{1/t})$.
\end{corollary}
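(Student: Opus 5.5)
The plan is to pair the upper bound of \Cref{thm:det} with the lower bound of \Cref{thm:det-committee-lower}, treating $k$ as a constant, and to show that each is $\Theta(m^{1/t})$.

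\textbf{Upper bound.} \Cref{thm:det} gives distortion at most $2(2m)^{1/t}$. Since $t\ge1$, we have $2^{1/t}\le 2$, so this is at most $4m^{1/t}=O(m^{1/t})$. This holds uniformly in $t$, $k$ and the norm.

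\textbf{Lower bound.} \Cref{thm:det-committee-lower} shows that every deterministic rule, with any thresholds, has distortion at least
\[
\max\left\{1,\frac{(m-k)^{1/t}-k}{L_{\N,k}}\right\}.
\]
By \Cref{prop:LNK}, $L_{\N,k}\le k=O(1)$, so it suffices to show that $(m-k)^{1/t}-k=\Omega(m^{1/t})$, at least in the relevant regime. I would split into two cases according to whether $m^{1/t}$ is large compared with $k$.

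\emph{Case $m^{1/t}\ge C$ for a suitable constant $C$ depending on $k$.} Take $m\ge 2k$, which holds for large $m$; otherwise $m=O(1)$, so $m^{1/t}=O(1)$ and the trivial bound $1$ suffices. Then $(m-k)^{1/t}\ge (m/2)^{1/t}\ge m^{1/t}/2$. Choosing $C=4k$ gives
\[
(m-k)^{1/t}-k\ge \frac{m^{1/t}}{2}-\frac{m^{1/t}}{4}=\frac{m^{1/t}}{4}.
\]
Dividing by $L_{\N,k}\le k$ yields a bound of at least $m^{1/t}/(4k)=\Omega(m^{1/t})$.

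\emph{Case $m^{1/t}<4k$.} Here $m^{1/t}=O(1)$, and the trivial lower bound $1$ (which is also immediate since distortion is always at least one) is already $\Omega(m^{1/t})$, with constant $1/(4k)$.

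\textbf{Main obstacle.} The only delicate point is uniformity over $t$. Since $t$ may grow with $m$, the term $-k$ cannot be absorbed naively. The case split above resolves this, because the constants depend only on $k$. One must also keep $k<m$ so that \Cref{thm:det-committee-lower} applies; for fixed $k$ this holds for all sufficiently large $m$, and the asymptotic claim concerns large $m$. Combining the two bounds gives $\Theta(m^{1/t})$, with hidden constants depending only on $k$.
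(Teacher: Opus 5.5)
Your proposal is correct and follows the paper's proof: you pair the bound $2(2m)^{1/t}\le 4m^{1/t}$ from \Cref{thm:det} with \Cref{thm:det-committee-lower}, use $L_{\N,k}\le k$, and split cases so that the $-k$ term is absorbed when $m^{1/t}\ge 4k$ (the paper splits on $(m-k)^{1/t}\ge 2k$ instead) and the trivial bound $1$ covers the remaining case. Your explicit constants are fine, and in fact $m\ge 2k$ and $k<m$ already follow from $m^{1/t}\ge 4k$ because $t\ge 1$, so your separate treatment of small $m$ is not needed.
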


\begin{proof}
	For any fixed $k$, we have $1\le L_{\N,k}\le k=O(1)$ and $m-k=\Theta(m)$. 
	If $(m-k)^{1/t}<2k=O(1)$, the lower bound follows trivially. 
	Otherwise, by \Cref{thm:det-committee-lower}, we have that
	\[
	\distortion_\N(f,\tr) \ge \frac{(m-k)^{1/t}-k}{L_{\N,k}} = \Omega(m^{1/t}).
	\]
	The matching upper bound follows from \Cref{thm:det}.
\end{proof}

\subsection{Randomized Rules}

The randomized lower bound follows the same two-part structure as the deterministic one. If the final threshold is large, all candidates can again be indistinguishable in the reports. A randomized rule, however, need not omit any candidate with probability one. Instead, we place high utility on a candidate with small marginal inclusion probability. When the final threshold is sufficiently small, we again exploit a large gap between consecutive auxiliary thresholds.

\bigskip
\noindent 
{\bf The last-threshold obstruction.}
Suppose every voter reports all candidates in the last approval set. Since every randomized outcome contains exactly $k$ candidates, the sum of the marginal inclusion probabilities of all candidates is $k$. Hence, some candidate is selected with probability at most $k/m$. Assigning this candidate a large utility while keeping all candidates in the last approval set gives the following bound.

\begin{lemma}\label{lem:rand-last-committee}
	For $1\le k<m$ and any normalized monotone symmetric norm $\snorm{\cdot}$, if $\tau_t>1/m$, then the distortion of every randomized size-$k$ committee rule $f$ using $\tr$ is
	\[
	\distortion_\N(f,\tr) \ge \frac{m\tau_t}{L_{\N,k}+k}.
	\]
\end{lemma}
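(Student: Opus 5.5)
The plan is to start from the input profile in which every voter reports $\sprofile_{i,t+1}=\candidates$. This profile is feasible because $\tau_t>1/m$: the uniform utility vector gives every candidate utility $1/m<\tau_t$. Let $p_c$ denote the marginal probability that $c$ belongs to the committee drawn from $f(\sprofile)$. Since every outcome has exactly $k$ members, $\sum_c p_c=k$, so some candidate $c^*$ satisfies $p_{c^*}\le k/m$.

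Next I will construct utilities consistent with this profile that put as much mass as possible on $c^*$. Give $c^*$ utility $\tau_t-\varepsilon$ for a small $\varepsilon>0$, which keeps it in the last approval set. Spread the remaining mass $1-\tau_t+\varepsilon$ uniformly over the other $m-1$ candidates, so each receives $(1-\tau_t+\varepsilon)/(m-1)$. This value is below $\tau_t$ whenever $1+\varepsilon<m\tau_t$, which holds for small $\varepsilon$ since $m\tau_t>1$. All voters receive these same utilities, so the profile induces $\sprofile$.

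For the optimum, any committee containing $c^*$ gives each voter utility at least $\tau_t-\varepsilon$ by monotonicity and normalization. Hence $s_\N^*\ge n(\tau_t-\varepsilon)$.

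For the rule's welfare, I split each voter's committee vector as $\ut_i^{\cmt}=\ut_i^{\cmt\cap\{c^*\}}+\ut_i^{\cmt\setminus\{c^*\}}$ and apply the triangle inequality. The first term has norm $(\tau_t-\varepsilon)\mathbf 1[c^*\in\cmt]$. The second term is coordinate-wise at most $\frac{1-\tau_t+\varepsilon}{m-1}\ones_{\cmt}$, so its norm is at most $\frac{1-\tau_t+\varepsilon}{m-1}L_{\N,k}$. Taking expectations and using $p_{c^*}\le k/m$ gives
\[
\expected[\sw_\N(\cmt)]\le n\Bigl(\frac{k}{m}(\tau_t-\varepsilon)+\frac{1-\tau_t+\varepsilon}{m-1}L_{\N,k}\Bigr).
\]

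It remains to compare the two bounds. As $\varepsilon\to0$, the ratio tends to $\tau_t\big/\bigl(\frac{k\tau_t}{m}+\frac{1-\tau_t}{m-1}L_{\N,k}\bigr)$. Since $\frac{1-\tau_t}{m-1}\le\frac1m$ (equivalent to $m\tau_t\ge1$) and $k\tau_t\le k$, the denominator is at most $(k+L_{\N,k})/m$. This yields the bound $m\tau_t/(L_{\N,k}+k)$ in the limit, so the supremum in the definition of distortion gives the claim.

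The main obstacle is making sure this limit statement is legitimate, since $\varepsilon$ must stay positive to keep strict inequalities. If $\varepsilon>0$ is needed exactly, I would instead bound the ratio for each $\varepsilon$ and take the supremum over $\varepsilon$. The remaining steps are routine.
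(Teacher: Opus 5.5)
Your proposal is correct and follows the paper's proof essentially step by step. It uses the same all-last-set profile and the same candidate $c^*$ with marginal probability at most $k/m$, given utility just below $\tau_t$ (your $\tau_t-\varepsilon$ is the paper's $h$). It also uses the same triangle-inequality bound on the expected welfare and the same limiting argument. The only differences are cosmetic: you split off the $c^*$ coordinate instead of writing the paper's exact coordinate-wise decomposition, and you pass to the limit before bounding the denominator rather than after. Neither change affects validity.
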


\begin{proof}
	Consider the input profile in which $\sprofile_{i,t+1}=\candidates$ for every voter $i$. This profile is feasible since the uniform utility vector assigns utility $1/m<\tau_t$ to every candidate. For each candidate $c$, let $p_c$ be its marginal inclusion probability under $f(\sprofile)$. Since every realized committee has size $k$,
	$\sum_{c\in\candidates}p_c = k$.
	Therefore, there exists a candidate $c^*$ with $p_{c^*}\le k/m$.
	
	Choose any $h$ with $1/m<h<\tau_t$, and assign every voter utilities
	\[
	\ut_i(c)=
	\begin{cases}
		h,&\text{ if }c=c^*,\\[3pt]
		\dfrac{1-h}{m-1},&\text{ otherwise}.
	\end{cases}
	\]
	These utilities satisfy the unit-sum assumption. Moreover, since $h>1/m$, we have $(1-h)/(m-1)<1/m<h<\tau_t$, so every candidate remains in the last approval set, and the constructed utility profile induces exactly $\sprofile$.
	
	For any realized committee $\cmt\sim f(\sprofile)$, coordinate-wise,
	\[
	\ut_i^{\cmt} \le
	\frac{1-h}{m-1}\ones_{\cmt} + \left(h-\frac{1-h}{m-1}\right) \ones_{\{c^*\in\cmt\}}\mathbf{e}_{c^*}.
	\]
	By monotonicity, the triangle inequality, positive homogeneity, and normalization,
	\[
	\snorm{\ut_i^{\cmt}} \le \frac{1-h}{m-1}L_{\N,k} + \left(h-\frac{1-h}{m-1}\right) \ones_{\{c^*\in\cmt\}}.
	\]
	Taking expectations and using $p_{c^*}\le k/m$ gives
	\[
	\expected_{\cmt\sim f(\sprofile)} \left[\snorm{\ut_i^{\cmt}}\right]
	\le \frac{1-h}{m-1}L_{\N,k} + \frac{k}{m} \left(h-\frac{1-h}{m-1}\right)
	< \frac{L_{\N,k}+k}{m}.
	\]
	Summing over all voters,
	\[
	\expected_{\cmt\sim f(\sprofile)}[\sw_\N(\cmt,\uprofile)]
	< \frac{n(L_{\N,k}+k)}{m}.
	\]
	On the other hand, any size-$k$ committee $\widehat\cmt$ containing $c^*$ satisfies $\ut_i^{\widehat\cmt}\ge h\mathbf{e}_{c^*}$ coordinate-wise. By monotonicity and normalization, $s_\N^*\ge \sw_\N(\widehat\cmt)\ge nh$. Hence,
	\[
	\distortion_\N(f,\tr)
	>
	\frac{mh}{L_{\N,k}+k}.
	\]
	Taking the supremum as $h$ tends to $\tau_t$ from below proves the claim.
\end{proof}

\bigskip
\noindent
{\bf Large gaps between consecutive thresholds.}
We next adapt the large-gap construction to randomized rules. As before, a large value of $R_j$ identifies an approval set in which utilities can vary substantially without changing the report. We now create sufficiently many candidates in this approval set so that at least one of them has small marginal inclusion probability.

\begin{lemma}\label{lem:rand-large-gap}
	For $1\le k<m$ and any normalized monotone symmetric norm $\snorm{\cdot}$, fix a threshold vector $\tr$ and $j\in[t]$. If $R_j\ge2k$, then the distortion of every randomized size-$k$ committee rule $f$ using $\tr$ is
	\[ 
	\distortion_\N(f,\tr) \ge \frac{R_j}{2(L_{\N,k}+k)}.
	\]
\end{lemma}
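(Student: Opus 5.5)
The plan is to adapt the construction of \Cref{lem:gap-committee}, combining it with the marginal-probability argument of \Cref{lem:rand-last-committee}. As there, $R_j\ge 2k>1$ gives $\sigma_{j-1}>\sigma_j\ge 1/(m-k)$. Hence $\sigma_{j-1}=\tau_{j-1}$ and $R_j=\tau_{j-1}/\sigma_j\le m-k$.

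\textbf{Construction.} Let $s\coloneqq\lceil R_j/2\rceil$, so that $k\le s\le R_j/2+1$ and $s\le R_j\le m-k<m$. Fix a set $A$ of $s$ candidates. Every voter reports all of $A$ in approval set $j$. Every candidate outside $A$ is reported in the approval set containing the value $(1-\tau_{j-1})/(m-s)$.

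\textbf{Choosing the heavy candidate.} Let $p_c$ be the marginal inclusion probabilities under $f(\sprofile)$. Since $\sum_{c\in A}p_c\le k$, some $c^*\in A$ has $p_{c^*}\le k/s$. Every voter gets utility $H\coloneqq\tau_{j-1}-(s-1)\sigma_j$ for $c^*$, utility $\sigma_j$ for the rest of $A$, and utility $(1-\tau_{j-1})/(m-s)$ outside $A$. This profile is unit-sum by construction.

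\textbf{Feasibility checks.} The outside value is at most $\sigma_j$. This follows from $1-R_j\sigma_j\le (m-k-R_j)\sigma_j\le (m-s)\sigma_j$, using $(m-k)\sigma_j\ge1$ and $s\le R_j+k$. We also have $H\ge (R_j-s+1)\sigma_j\ge R_j\sigma_j/2\ge\sigma_j\ge\tau_j$, so $c^*$ stays in set $j$ from below. The upper side $H<\tau_{j-1}$ holds whenever $s\ge 2$.

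The degenerate case $s=1$ forces $k=1$ and $R_j=2$, giving $H=\tau_{j-1}$ exactly, which falls in the wrong set. I would handle it by lowering $H$ by a small $\varepsilon$ and adding $\varepsilon$ to the outside candidates, then letting $\varepsilon\to0$. This degenerate boundary case is the step I expect to need the most care. The profile then induces $\sprofile$, and $\sprofile$ itself does not depend on $c^*$.

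\textbf{Welfare bounds.} For every realized committee we have, coordinate-wise, $\ut_i^{\cmt}\le \sigma_j\ones_{\cmt}+(H-\sigma_j)\ones_{\{c^*\in\cmt\}}\mathbf e_{c^*}$. Monotonicity, the triangle inequality and normalization then give
\[
\expected\bigl[\sw_\N(\cmt)\bigr]\le n\Bigl(\sigma_jL_{\N,k}+\frac{k}{s}(H-\sigma_j)\Bigr)\le n\sigma_j\Bigl(L_{\N,k}+k\frac{R_j-s}{s}\Bigr)\le n\sigma_j(L_{\N,k}+k),
\]
using $H-\sigma_j\le (R_j-s)\sigma_j$ and $s\ge R_j/2$. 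On the other side, any committee containing $c^*$ gives $s_\N^*\ge nH\ge nR_j\sigma_j/2$. Dividing yields the claimed bound $R_j/(2(L_{\N,k}+k))$.
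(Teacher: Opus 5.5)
Your proof is correct and follows essentially the same route as the paper. Both arguments use a set $A$ of roughly $R_j/2$ candidates, all reported in approval set $j$, and pick a member $c^*$ whose marginal inclusion probability is at most $k/|A|$. Both then give $c^*$ a heavy utility and the rest of $A$ utility $\sigma_j$, and use the same coordinate-wise domination to bound the expected welfare by $n\sigma_j(L_{\N,k}+k)$.

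The only difference is the parametrization. The paper takes $|A|=\lfloor R_j/2\rfloor+1$ and gives $c^*$ utility exactly $\tau_{j-1}/2$, which is always strictly below $\tau_{j-1}$, so your degenerate case $s=1$ never arises. In that case ($k=1$, $R_j=2$) the claimed bound is only $1/2$, which holds trivially because distortion is always at least $1$, so your $\varepsilon$-perturbation, while valid, is not needed.
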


\begin{proof}
	Fix $j\in[t]$ with $R_j\ge2k$. Since $\tau_{j-1}\le1$ and $\sigma_j\ge1/(m-k)$, we have $R_j\le m-k$. Since $R_j>1$, we also have $\sigma_{j-1}>\sigma_j\ge1/(m-k)$, and hence $\sigma_{j-1}=\tau_{j-1}$. Therefore,
	\[
	R_j=\frac{\tau_{j-1}}{\sigma_j}.
	\]
	For simplicity, let $q\coloneqq\lfloor R_j/2\rfloor$. Then, $q\le R_j/2$ and $q+1>R_j/2$. Choose a set $A\subseteq\candidates$ with $|A|=q+1$. This is possible since $R_j\le m-k<m$.
	
	We will assign utility $\tau_{j-1}/2$ to one candidate in $A$, utility $\sigma_j$ to the other $q$ candidates in $A$, and split the remaining utility equally among the $m-q-1$ candidates outside $A$. The latter utility value is
	\[
	\frac{1-\tau_{j-1}/2-q\sigma_j}{m-q-1}.
	\]
	It is non-negative because $q\le R_j/2$ and $R_j\sigma_j=\tau_{j-1}$. It is also at most $\sigma_j$, since
	\[
	(m-q-1)\sigma_j
	-\left(1-\frac{\tau_{j-1}}2-q\sigma_j\right)
	=
	\left(m-1+\frac{R_j}{2}\right)\sigma_j-1
	\ge0,
	\]
	where the last inequality follows from $\sigma_j\ge1/(m-k)$ and $m-1+R_j/2\ge m-k$.
	
	Consider the input profile $\sprofile$ in which, for every voter, all candidates in $A$ are reported in approval set $j$, while every candidate outside $A$ is reported in the approval set corresponding to the utility value above. For each $c\in A$, let $p_c$ be its marginal inclusion probability under $f(\sprofile)$. Since every realized committee has size $k$,
	\[
	\sum_{c\in A}p_c = \expected_{\cmt\sim f(\sprofile)}[|\cmt\cap A|] \le k.
	\]
	Hence, there exists $c^*\in A$ with
	\[
	p_{c^*} \le \frac{k}{q+1} < \frac{2k}{R_j}.
	\]
	Now assign every voter utilities
	\[
	\ut_i(c)=
	\begin{cases}
		\dfrac{\tau_{j-1}}2,&\text{ if }c=c^*,\\[6pt]
		\sigma_j,&\text{ if }c\in A\setminus\{c^*\},\\[3pt]
		\dfrac{1-\tau_{j-1}/2-q\sigma_j}{m-q-1},&\text{ if }c\notin A.
	\end{cases}
	\]
	The utilities satisfy the unit-sum assumption by construction. Moreover, since $R_j\ge2k\ge2$,
	\[
	\tau_j
	\le
	\sigma_j
	\le
	\frac{\tau_{j-1}}2
	<
	\tau_{j-1}.
	\]
	Thus, both utility values assigned to candidates in $A$ lie in the interval corresponding to approval set $j$, and the constructed utility profile induces exactly $\sprofile$.
	
	For any realized committee $\cmt\sim f(\sprofile)$, every selected candidate other than possibly $c^*$ has utility at most $\sigma_j$. Hence, coordinate-wise,
	\[
	\ut_i^{\cmt} \le \sigma_j\ones_{\cmt} 
	+ \left(\frac{\tau_{j-1}}2-\sigma_j\right)\ones_{\{c^*\in\cmt\}}\mathbf{e}_{c^*}.
	\]
	By monotonicity, the triangle inequality, positive homogeneity, and normalization,
	\[
	\snorm{\ut_i^{\cmt}} \le \sigma_jL_{\N,k}
	+ \left(\frac{\tau_{j-1}}2-\sigma_j\right) \ones_{\{c^*\in\cmt\}}.
	\]
	Taking expectations and using $p_{c^*}<2k/R_j$ gives
	\[
	\expected_{\cmt\sim\fr(\sprofile)} \left[\snorm{\ut_i^{\cmt}}\right]
	\le \sigma_jL_{\N,k} + \frac{2k}{R_j}\cdot\frac{\tau_{j-1}}2
	= \sigma_j(L_{\N,k}+k).
	\]
	Summing over all voters,
	\[
	\expected_{\cmt\sim\fr(\sprofile)} [\sw_\N(\cmt,\uprofile)]
	\le n\sigma_j(L_{\N,k}+k).
	\]
	On the other hand, any size-$k$ committee $\widehat\cmt$ containing $c^*$ satisfies $\ut_i^{\widehat\cmt}\ge(\tau_{j-1}/2)\mathbf{e}_{c^*}$ coordinate-wise. By monotonicity and normalization,
	\[
	s_\N^* 
	\ge \sw_\N(\widehat\cmt)
	\ge \frac{n\tau_{j-1}}2.
	\]
	Consequently,
	\[
	\distortion_\N(\fr,\tr)
	\ge \frac{\tau_{j-1}}{2\sigma_j(L_{\N,k}+k)}
	= \frac{R_j}{2(L_{\N,k}+k)}.
	\]
	This completes the proof. 
\end{proof}

We can now combine the last-threshold and large-gap bounds to obtain the randomized lower bound.

\begin{theorem}\label{thm:rand-committee-lower}
	For $1\le k<m$, $t\ge1$, and any normalized monotone symmetric norm $\snorm{\cdot}$, the distortion of every randomized size-$k$ committee rule $f$ using an arbitrary threshold vector $\tr$ is
	\[
	\distortion_\N(f,\tr) \ge \max\left\{1,\,\frac{(m-k)^{1/(t+1)}}{2(L_{\N,k}+k)}\right\}.
	\]
\end{theorem}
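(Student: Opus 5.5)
We will combine \Cref{lem:rand-last-committee} and \Cref{lem:rand-large-gap} in a dichotomy on the final threshold $\tau_t$. This mirrors the proof of \Cref{thm:det-committee-lower}, except that the cutoff for the last threshold is chosen to balance the two lemmas. The bound $1$ is trivial: the optimal welfare is positive by \eqref{eq:baseline}, and no committee has welfare exceeding $s_\N^*$. It therefore suffices to show $\distortion_\N(f,\tr)\ge \frac{(m-k)^{1/(t+1)}}{2(L_{\N,k}+k)}$. Throughout, write $D\coloneqq (m-k)^{1/(t+1)}$.

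\textbf{Case 1: $\tau_t > D/(m-k)$.} Since $D\ge 1$, we have $\tau_t>1/(m-k)\ge 1/m$, so \Cref{lem:rand-last-committee} applies and gives
\[
\distortion_\N(f,\tr)\ge \frac{m\tau_t}{L_{\N,k}+k} > \frac{mD}{(m-k)(L_{\N,k}+k)} \ge \frac{D}{L_{\N,k}+k},
\]
which is stronger than needed.

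\textbf{Case 2: $\tau_t\le D/(m-k)$.} Here we use the large-gap lemma. Redefining the auxiliary cutoff would break the hypotheses of \Cref{lem:rand-large-gap}, which are stated with $\sigma_j=\max\{\tau_j,1/(m-k)\}$. So we keep $\sigma_j$ as defined and handle the product of the ratios $R_j$ directly.
\begin{itemize}
\item If $\tau_t\le 1/(m-k)$, then $\prod_{j=1}^t R_j=m-k$, so some $R_{j^*}\ge (m-k)^{1/t}\ge D$.
\item If $1/(m-k)<\tau_t\le D/(m-k)$, then $\sigma_t=\tau_t$ and $\prod_{j=1}^t R_j = 1/\tau_t \ge (m-k)/D = D^{t}$. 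Hence some $R_{j^*}\ge D$ in this subcase as well.
\end{itemize}
In both subcases there is a $j^*$ with $R_{j^*}\ge D$. If $R_{j^*}\ge 2k$, then \Cref{lem:rand-large-gap} yields
\[
\distortion_\N(f,\tr)\ge \frac{R_{j^*}}{2(L_{\N,k}+k)}\ge \frac{D}{2(L_{\N,k}+k)}.
\]

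The main obstacle is the remaining situation $R_{j^*}<2k$, where \Cref{lem:rand-large-gap} cannot be invoked. Here we argue that the claimed bound is already at most $1$. Indeed, $D\le R_{j^*}<2k\le 2(L_{\N,k}+k)$, so $\frac{D}{2(L_{\N,k}+k)}<1$, and the trivial bound suffices. The same observation covers any case where $D<2k$.

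A secondary point to verify is that \Cref{lem:rand-large-gap} is valid when $\sigma_t=\tau_t>1/(m-k)$. Its proof only uses $\sigma_j\ge 1/(m-k)$, $\sigma_j\ge\tau_j$, and $\sigma_{j-1}=\tau_{j-1}$ whenever $R_j>1$, and all of these hold for the original definition of $\sigma_j$ regardless of the value of $\tau_t$. Combining the two cases with the trivial bound completes the argument.
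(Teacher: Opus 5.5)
Your proof is correct and matches the paper's argument: your cutoff $D/(m-k)$ equals the paper's $(m-k)^{-t/(t+1)}$, the first case uses \Cref{lem:rand-last-committee}, and the second finds $R_{j^*}\ge D$ and applies \Cref{lem:rand-large-gap}. The differences are cosmetic. The paper disposes of $D<2k$ before splitting on $\tau_t$, whereas you handle it after locating $j^*$. The paper also bounds $\sigma_t\le(m-k)^{-t/(t+1)}$ in one step instead of using your two subcases.
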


\begin{proof}
	If $(m-k)^{1/(t+1)}<2k$, then
	\[
	\frac{(m-k)^{1/(t+1)}}{2(L_{\N,k}+k)}
	< \frac{k}{L_{\N,k}+k}
	< 1
	\]
	and the lower bound is trivially $\distortion_\N(\fr,\tr)\ge1$.
	It remains to consider the case $(m-k)^{1/(t+1)}\ge 2k$. 
	We distinguish two cases according to the final threshold.
	\begin{itemize}
		\item
		If $\tau_t>(m-k)^{-t/(t+1)}$, then $\tau_t>1/m$ since $(m-k)^{-t/(t+1)}>1/m$, and \Cref{lem:rand-last-committee} applies. Moreover,
		\[
		m\tau_t 
		> m(m-k)^{-t/(t+1)}
		= \frac{m}{m-k}(m-k)^{1/(t+1)}
		\ge (m-k)^{1/(t+1)}.
		\]
		Hence,
		\[
		\distortion_\N(f,\tr)
		\ge \frac{m\tau_t}{L_{\N,k}+k}
		\ge \frac{(m-k)^{1/(t+1)}}{2(L_{\N,k}+k)}.
		\]
		
		\item
		If $\tau_t\le(m-k)^{-t/(t+1)}$, recall that $\sigma_j=\max\{\tau_j,1/(m-k)\}$ and $R_j=\sigma_{j-1}/\sigma_j$. We have $\sigma_0=1$ and $\sigma_t \le (m-k)^{-t/(t+1)}$  because both $\tau_t$ and $1/(m-k)$ are at most this quantity. Therefore,
		\[
		\prod_{j=1}^tR_j
		= \frac{\sigma_0}{\sigma_t}
		\ge (m-k)^{t/(t+1)}
		\]
		and there exists $j^*\in[t]$ such that
		$R_{j^*} \ge (m-k)^{1/(t+1)} \ge 2k$.
		By \Cref{lem:rand-large-gap},
		\[
		\distortion_\N(f,\tr)
		\ge \frac{R_{j^*}}{2(L_{\N,k}+k)}
		\ge \frac{(m-k)^{1/(t+1)}}{2(L_{\N,k}+k)}.
		\]
	\end{itemize}
	The claim follows by putting everything together. 
\end{proof}

To compare the randomized lower bound with the upper bound from the previous section, we next specialize \Cref{thm:rand-committee-lower} to the utility functions introduced earlier. Substituting $L_{\ell_\infty,k}=1$, $L_{\ell_1,k}=k$, and $L_{\ell_p,k}=k^{1/p}$ gives the following bounds.

\begin{corollary}\label{cor:rand-committee-lower-norms}
	Let $1\le k<m$ and $t\ge1$. The distortion of every randomized size-$k$ committee rule $\fr$ using an arbitrary threshold vector $\tr$ satisfies the following bounds:
	\begin{enumerate}
		\item \textbf{Best-representative utility:}
		\[
		\distortion_{\ell_\infty}(\fr,\tr)
		\ge
		\max\left\{
		1,\,
		\frac{(m-k)^{1/(t+1)}}{2(k+1)}
		\right\}.
		\]
		
		\item \textbf{Additive utility:}
		\[
		\distortion_{\ell_1}(\fr,\tr)
		\ge
		\max\left\{
		1,\,
		\frac{(m-k)^{1/(t+1)}}{4k}
		\right\}.
		\]
		
		\item \textbf{$\ell_p$ utility with $p\ge2$:}
		\[
		\distortion_{\ell_p}(\fr,\tr)
		\ge
		\max\left\{
		1,\,
		\frac{(m-k)^{1/(t+1)}}{2(k^{1/p}+k)}
		\right\}.
		\]
	\end{enumerate}
\end{corollary}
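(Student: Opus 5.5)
This corollary is a direct specialization of \Cref{thm:rand-committee-lower}, so the plan is to substitute the value of $L_{\N,k}$ for each of the three norms into the bound
\[
\max\left\{1,\,\frac{(m-k)^{1/(t+1)}}{2(L_{\N,k}+k)}\right\}.
\]
Before doing so, I will confirm that each norm is a normalized monotone symmetric norm, so that the theorem applies. For $\ell_\infty$ and $\ell_1$ this is standard. For $\ell_p$, symmetry and monotonicity on non-negative vectors are immediate, and $\lVert\mathbf{e}_c\rVert_p=1$ gives normalization.

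Next I will compute $L_{\N,k}=\snorm{\ones_{\cmt}}$ in each case, as recorded after \Cref{prop:LNK}.
\begin{itemize}
\item For $\ell_\infty$, $\lVert\ones_{\cmt}\rVert_\infty=1$, so $L_{\N,k}+k=k+1$. This yields the denominator $2(k+1)$.
\item For $\ell_1$, $\lVert\ones_{\cmt}\rVert_1=k$, so $L_{\N,k}+k=2k$. This yields the denominator $4k$.
\item For $\ell_p$, $\lVert\ones_{\cmt}\rVert_p=k^{1/p}$. This yields the denominator $2(k^{1/p}+k)$.
\end{itemize}
The hypothesis $p\ge2$ is not used; the same substitution works for every $p\ge1$.

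Finally, I will note that the notation $\fr$ in the statement denotes an arbitrary randomized rule, matching the $f$ of \Cref{thm:rand-committee-lower}, and that the threshold vector $\tr$ is arbitrary there as well.

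There is no genuine obstacle here. The only care needed is the arithmetic $L_{\N,k}+k$ in each case, and confirming that the theorem's hypotheses $1\le k<m$ and $t\ge1$ match those of the corollary.
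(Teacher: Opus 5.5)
Your proposal is correct and is exactly the paper's argument: the corollary follows by substituting $L_{\ell_\infty,k}=1$, $L_{\ell_1,k}=k$, and $L_{\ell_p,k}=k^{1/p}$ into the bound of \Cref{thm:rand-committee-lower}. Your side remark is also right: the restriction $p\ge2$ plays no role, and the same substitution gives the bound for every $1\le p<\infty$.
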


In the single-winner case $k=1$, and more generally for any constant committee size, the lower bound of \Cref{thm:rand-committee-lower} matches the upper bound of \Cref{thm:rand} up to a constant factor. Hence, the optimal randomized distortion is $\Theta(m^{1/(t+1)})$ in these cases.

\begin{corollary}\label{cor:constant-k-rand-tight}
	For any fixed $k=O(1)$ and normalized monotone symmetric norm, the optimal randomized distortion with $t$ thresholds is $\Theta(m^{1/(t+1)})$.
\end{corollary}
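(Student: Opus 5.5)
The plan mirrors the proof of \Cref{cor:constant-k-tight}, combining the upper bound of \Cref{thm:rand} with the lower bound of \Cref{thm:rand-committee-lower}. For the upper bound, \Cref{thm:rand} with $\delta=m^{1/(t+1)}$ gives that $\fr_\delta$ has distortion at most $2\min\{m^{1/(t+1)},m/k\}\le 2m^{1/(t+1)}=O(m^{1/(t+1)})$. This holds for every $k$ and every normalized monotone symmetric norm, so nothing more is needed on this side.

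For the lower bound, fix $k=O(1)$ and an arbitrary threshold vector $\tr$. If $k=m$ the only committee is $\candidates$, so the distortion is $1$. Since $k$ is constant, this case occurs only for $m=O(1)$, where $m^{1/(t+1)}=O(1)$ and a lower bound of $1$ suffices. So we may assume $1\le k<m$.

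By \Cref{prop:LNK}, $1\le L_{\N,k}\le k$, hence $L_{\N,k}+k\le 2k=O(1)$. For $m\ge 2k$ we have $m-k\ge m/2$, so $(m-k)^{1/(t+1)}\ge 2^{-1/(t+1)}m^{1/(t+1)}\ge m^{1/(t+1)}/2$. Then \Cref{thm:rand-committee-lower} yields
\[
\distortion_\N(f,\tr)\ge \frac{(m-k)^{1/(t+1)}}{2(L_{\N,k}+k)}\ge \frac{m^{1/(t+1)}}{8k}=\Omega(m^{1/(t+1)}).
\]
For the finitely many $m<2k$, the trivial bound $\distortion_\N\ge 1$ already gives $\Omega(m^{1/(t+1)})$, since $m^{1/(t+1)}<2k=O(1)$.

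There is no real obstacle. The only care needed is to track constants uniformly in $t$: the factor $2^{-1/(t+1)}\ge 1/2$ is independent of $t$, so the hidden constants depend only on $k$. Combining the two directions gives $\Theta(m^{1/(t+1)})$.
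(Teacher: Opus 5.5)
Your proposal is correct and follows the paper's proof: the upper bound comes from \Cref{thm:rand}, and the lower bound comes from \Cref{thm:rand-committee-lower} using $L_{\N,k}+k\le 2k=O(1)$ and $m-k=\Theta(m)$. Your separate treatment of $k=m$ and $m<2k$, and the explicit constant $1/(8k)$ that does not depend on $t$, fill in details the paper leaves implicit.
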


\begin{proof}
	For any fixed $k$, we have $1\le L_{\N,k}\le k=O(1)$ and $m-k=\Theta(m)$. Hence, \Cref{thm:rand-committee-lower} gives an $\Omega(m^{1/(t+1)})$ lower bound. The matching upper bound follows from \Cref{thm:rand}.
\end{proof}

\section{Best-Representative Utility}\label{sec:best-rep}
While our upper bounds in \Cref{sec:upper} hold for every norm in our framework, the general lower bounds derived in \Cref{sec:lower} do not necessarily match them when the committee size grows with $m$. Here, we focus on best-representative utility, corresponding to the $\ell_\infty$-norm, which is one of the most widely studied objectives in multiwinner voting. We establish tight deterministic bounds whenever the committee excludes a constant fraction of the candidates and tight randomized bounds for every committee size.

\subsection{Deterministic Rules}

We first consider committees of size $k\le m/2$.

\begin{theorem}\label{thm:lower-for-best}
For $1\le k\le m/2$ and every threshold vector $\tr$, the distortion of every deterministic rule $f$ is
\[
    \distortion_{\ell_\infty}(f,\tr)=\Omega(m^{1/t}).
\]
\end{theorem}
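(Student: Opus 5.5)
The plan is to strengthen the gap construction of \Cref{lem:gap-committee}. There, the lower bound loses an additive $k$ because one approval set must contain $k+1$ report-identical candidates. Under best-representative utility, a voter only cares about her single best selected candidate, so we can instead spread \emph{many voter groups} over \emph{disjoint blocks} of candidates. A committee of size $k$ can then serve at most a bounded number of blocks well. The aim is a bound of order $R_{j}$ itself, with no $-k$ term.

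\textbf{Step 1: reduce to a large gap.} Since $k\le m/2$, we have $1/(m-k)\le 2/m$. If $\tau_t>2/m$, then $\tau_t>1/(m-k)$ and \Cref{lem:unbounded} gives unbounded distortion. Otherwise, define auxiliary values $\sigma_j\coloneqq\max\{\tau_j,2/m\}$ and $R_j\coloneqq\sigma_{j-1}/\sigma_j$. Then $\prod_j R_j=m/2$, so some $j$ has $R\coloneqq R_j\ge(m/2)^{1/t}$. If $R$ is below a fixed constant, the claim is trivial since distortion is at least $1$. So assume $R\ge 8$, say. As in \Cref{lem:gap-committee}, we get $\sigma_{j-1}=\tau_{j-1}=R\sigma_j$. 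Both $\sigma_j$ and $h\coloneqq\tau_{j-1}/2$ lie in the interval of approval set $j$.

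\textbf{Step 2: the block construction.} Let $s\coloneqq\lfloor R/4\rfloor$ be the block size. Let the number of blocks be $b\coloneqq\min\{2k,\lfloor m/s\rfloor\}$; note $b\ge 2$ under our assumptions. Take disjoint blocks $A_1,\dots,A_b$ of size $s$, and for each block a group of $v$ voters (the same $v$ for every group). A voter of group $\ell$ reports all of $A_\ell$ in approval set $j$. Her remaining mass, $1-h-(s-1)\sigma_j$, is split equally over the other $m-s$ candidates, and she reports those candidates accordingly.

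We must check that this split value is at most $\sigma_j$ and non-negative. Non-negativity uses $s\sigma_j\le R\sigma_j/4<\tau_{j-1}/2$. The upper bound uses $\sigma_j\ge 2/m$ together with $s\le m/2$; this needs a short calculation like the one in \Cref{lem:rand-large-gap}. Because the report is the same whether the high value $h$ sits on any particular candidate of $A_\ell$, the rule's committee $\cmt$ is fixed before the adversary decides where $h$ goes.

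\textbf{Step 3: count good and bad blocks.} Call a block \emph{good} if it contains a candidate outside $\cmt$.
\begin{itemize}
\item If $b=2k$: $\cmt$ has only $k$ members, so it contains all of a block for at most $k$ blocks; at least $k$ blocks are good.
\item If $b=\lfloor m/s\rfloor<2k$: $\cmt$ misses at least $m/2$ candidates, while the blocks cover all but fewer than $s$ candidates. Hence a constant fraction of the $b$ blocks, at least $\Omega(b)$ of them, are good. (This subcase needs care; the sub-case analysis is where I expect the bookkeeping to be most delicate.)
\end{itemize}
In every good block $A_\ell$, place utility $h$ on a candidate $c_\ell\in A_\ell\setminus\cmt$ for all voters of group $\ell$. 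Every other candidate has utility at most $\sigma_j$ for every voter.

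Now compare welfares. Every selected candidate has utility at most $\sigma_j$ for every voter, so $\sw_{\ell_\infty}(\cmt)\le bv\sigma_j$. An alternative committee contains $c_\ell$ for $\min\{k,\#\text{good}\}=\Omega(b)$ good blocks, since $b\le 2k$. Its welfare is therefore $\Omega(b)\cdot vh$. The ratio is $\Omega(h/\sigma_j)=\Omega(R)=\Omega(m^{1/t})$.

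\textbf{Main obstacle.} The main difficulty is making the parameters consistent in all regimes of $k$. Three things must hold simultaneously: $bs\le m$; the leftover utility per candidate stays at most $\sigma_j$, which is exactly why the floor is $2/m$ rather than $1/(m-k)$; and the number of good blocks is $\Omega(b)$ while also at most $k$ of them can be served. I would first verify the case $b=2k\le m/s$ cleanly, then handle $2k>m/s$ via the counting of missed candidates.
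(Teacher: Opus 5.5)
There is a genuine gap in Step~3. The bound $\sw_{\ell_\infty}(\cmt)\le bv\sigma_j$, and the sentence before it that every candidate other than the $c_\ell$ has utility at most $\sigma_j$, ignore blocks with $A_\ell\subseteq\cmt$. Under your Step~2 utilities, every voter of group $\ell$ has value $h$ on some candidate of $A_\ell$. So if $A_\ell\subseteq\cmt$, these voters get utility $h$ from $\cmt$ wherever you place $h$.

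When $b=2k$ this is only a bookkeeping slip. At most $k/s$ blocks lie inside $\cmt$, and $h/s\le 4\sigma_j$ for $R\ge8$, so these blocks add only $O(kv\sigma_j)$ and you still get $\Omega(R)$. This is exactly the paper's Case~1: $2k$ disjoint sets of size $q=\lfloor R/8\rfloor$, with at most $k/q$ of them inside $\cmt$.

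When $b=\lfloor m/s\rfloor<2k$, however, the construction fails outright, because $\cmt$ can swallow a constant fraction of the blocks. For example, take $t=1$, $k=m/2$ with $8\mid m$, and $\tau_1=2/m$. Then $R=m/2$, $h=1/2$, $s=m/8$, and $b=8$. A rule choosing $\cmt=A_1\cup\cdots\cup A_4$ gives half of all voters utility $1/2$. No committee gives any voter more than $1/2$, so your instance certifies a ratio of at most $2$, not $\Omega(m)$.

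What is missing is a way to neutralize voters whose whole block is selected. One fix uses the fact that the report only records that every candidate of $A_\ell$ lies in $[\tau_j,\tau_{j-1})$. After seeing $\cmt$, give each voter with $A_\ell\subseteq\cmt$ the flattened value $(h+(s-1)\sigma_j)/s$ on every candidate of $A_\ell$. This value lies in $[\sigma_j,h]$ and leaves the outside values unchanged, so the report is the same. It is also at most $\sigma_j+h/s\le 5\sigma_j$. This is the device $\ut_i^{(2)}$ from the proof of \Cref{thm:best-large-committees}. With it, your disjoint-block counting gives ratio $\Omega(h/\sigma_j)=\Omega(R)$ in both regimes, and you would not even need to count bad blocks.

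The paper instead keeps the spiked utilities and, when $kR>m$, uses one voter for every size-$q$ subset of $\candidates$. A size-$k$ committee disjoint from $\cmt$, which exists since $k\le m/2$, meets at least a $1-e^{-1/16}$ fraction of the sets. At most a $(k/m)^q\le 2^{-q}$ fraction of the sets lie inside $\cmt$, so the voters who obtain $h$ from $\cmt$ are exponentially few.
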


\begin{proof}
If $\tau_t>1/(m-k)$, the distortion is unbounded by
\Cref{lem:unbounded}. Otherwise, let
$\sigma_j=\max\{\tau_j,1/(m-k)\}$ and
$R_j=\sigma_{j-1}/\sigma_j$, as in the proof of
\Cref{thm:det-committee-lower}. Since
$\prod_{j=1}^t R_j=m-k$, there exists $j^*\in[t]$
such that
\[
    R_{j^*}\ge(m-k)^{1/t}\ge\frac12m^{1/t}.
\]
Since the distortion is at least $1$, we may assume that $m^{1/t}$ is sufficiently large, and hence that
$R_{j^*}\ge16$. Let $q=\lfloor R_{j^*}/8\rfloor$. Then $q\ge2$ and $R_{j^*}/16\le q\le R_{j^*}/8$.
Moreover, since $R_{j^*}>1$, we have
$\sigma_{j^*-1}=\tau_{j^*-1}$, and thus
\[
    \tau_{j^*}\le\sigma_{j^*}
    =\frac{\tau_{j^*-1}}{R_{j^*}}
    \le\frac{\tau_{j^*-1}}{16}
    <\frac{\tau_{j^*-1}}4.
\]
For every voter $i$, consider a set
$E_i\subseteq\candidates$ of size $q$ and a candidate
$c_i^*\in E_i$. We define the utility of $i$ as follows:
\[
    \ut_i(c)=
    \begin{cases}
        \tau_{j^*-1}/4, & \text{if } c=c_i^*,\\
        \sigma_{j^*}, & \text{if }  c\in E_i\setminus\{c_i^*\},\\
        \dfrac{1-\tau_{j^*-1}/4-(q-1)\sigma_{j^*}}{m-q}, & \text{if }  c\notin E_i.
    \end{cases}
\]
These utilities satisfy the unit-sum assumption by
construction. They are also non-negative, since
\[
    \frac{\tau_{j^*-1}}4+(q-1)\sigma_{j^*}
    \le\frac{\tau_{j^*-1}}4
    +\frac{R_{j^*}\sigma_{j^*}}8
    =\frac{3\tau_{j^*-1}}8
    \le\frac38.
\]
Moreover, since $R_{j^*}\le m-k\le m$, we have $q\le m/8$. Since the numerator in the utility
for every candidate outside $E_i$ is at most $1$, and $\sigma_{j^*}\ge1/m$, this utility is at most
\[
    \frac{1}{m-q}
    \le\frac{8}{7m}
    \le\frac{8\sigma_{j^*}}7
    <\frac{\tau_{j^*-1}}4.
\]
Consequently, the utility for every candidate is at most $\tau_{j^*-1}/4$.

Since both $\sigma_{j^*}$ and $\tau_{j^*-1}/4$ lie in the same threshold interval, the reported approval sets do not depend on the choice of $c_i^*$. We can therefore choose these candidates after observing the committee selected by the rule. We now distinguish two cases depending on the value
of $kR_{j^*}$, using a different construction of the sets $E_i$ in each case.

\medskip
\noindent 
{\bf Case 1: $kR_{j^*}\le m$.}
Consider $2k$ voters with pairwise disjoint sets $E_i$, which is possible since $2kq\le kR_{j^*}/4\le m$.
Let $\cmt=f(\sprofile)$. Since $\cmt$ contains $k$ candidates, it intersects at most $k$ of the sets $E_i$.
Thus, at least $k$ of these sets are disjoint from $\cmt$. For every voter with $E_i\setminus\cmt\ne\varnothing$, choose $c_i^*\in E_i\setminus\cmt$, and select the remaining favorites arbitrarily. These choices do not affect the reported threshold approvals, so the rule still selects $\cmt$. Since the sets $E_i$ are pairwise disjoint, a committee containing the favorites of $k$ voters whose sets are disjoint from $\cmt$ achieves welfare at least $k\tau_{j^*-1}/4$. Hence, $\sstar_{\ell_\infty}\ge k\tau_{j^*-1}/4$.

By the choice of $c_i^*$, every voter with $E_i\not\subseteq\cmt$ has $c_i^*\notin\cmt$. Such a voter obtains utility at most $\sigma_{j^*}$ from candidates in $E_i$ and at most $8\sigma_{j^*}/7$ from candidates outside $E_i$. Since there are $2k$ voters, their total contribution to the social welfare is at most $2k\cdot 8\sigma_{j^*}/7=16k\sigma_{j^*}/7$. On the other hand, since the sets $E_i$ are pairwise disjoint and each contains $q$ candidates, at most $k/q$ voters have $E_i\subseteq\cmt$. Since the utility for every candidate is at most $\tau_{j^*-1}/4$, each such voter obtains at most this much utility, and their total contribution is at most $k\tau_{j^*-1}/(4q)$. Moreover, since $q\ge R_{j^*}/16$ and $R_{j^*}=\tau_{j^*-1}/\sigma_{j^*}$, we have $\tau_{j^*-1}/(4q)\le4\sigma_{j^*}$. Therefore,
\[
    \sw_{\ell_\infty}(\cmt)
    \le\frac{16}{7}k\sigma_{j^*}
    +\frac{k\tau_{j^*-1}}{4q}
    \le\left(\frac{16}{7}+4\right)k\sigma_{j^*}
    <7k\sigma_{j^*}.
\]
Putting everything together, the distortion is at least $\tau_{j^*-1}/(28\sigma_{j^*})=R_{j^*}/28 = \Omega(m^{1/t})$.

\medskip
\noindent 
{\bf Case 2: $kR_{j^*}>m$.}
Consider an instance with one voter for every size-$q$ subset $E_i$ of $\candidates$, so that $n=\binom{m}{q}$. Let $\cmt=f(\sprofile)$. Since $k\le m/2$, there exists a committee $\cmt'\subseteq\candidates\setminus\cmt$ of size $k$. For every voter with $E_i\cap\cmt'\ne\varnothing$, choose $c_i^*\in E_i\cap\cmt'$. For every remaining voter with $E_i\setminus\cmt\ne\varnothing$, choose $c_i^*\in E_i\setminus\cmt$, and select the remaining favorites arbitrarily. These choices do not affect the reported threshold approvals, so the rule still selects $\cmt$.

We first lower-bound the optimal welfare by considering $\cmt'$. Since there is one voter for every size-$q$ subset of $\candidates$, the fraction of voters whose sets intersect $\cmt'$ is
\[
    1-\frac{\binom{m-k}{q}}{\binom{m}{q}}
    \ge 1-\left(1-\frac{k}{m}\right)^q
    \ge 1-e^{-kq/m}
    \ge 1-e^{-1/16},
\]
where the last inequality follows from $q\ge R_{j^*}/16$ and $kR_{j^*}>m$. By the choice of $c_i^*$, the favorite candidate of each such voter belongs to $\cmt'$ and her utility is $\tau_{j^*-1}/4$. Therefore, 
\[
    \sstar_{\ell_\infty}\ge
    (1-e^{-1/16})\frac{n\tau_{j^*-1}}{4}.
\]
We next upper-bound the welfare of $\cmt$. By the choice of $c_i^*$, every voter with $E_i\not\subseteq\cmt$ has $c_i^*\notin\cmt$, and thus obtains utility at most $8\sigma_{j^*}/7$. Since there are $n$ voters, their total contribution to the social welfare is at most $n \cdot 8\sigma_{j^*}/7$. On the other hand, since there is one voter for every size-$q$ subset of $\candidates$, the number of voters with $E_i\subseteq\cmt$ is $\binom{k}{q}$. Their fraction is therefore at most
\[
    \frac{\binom{k}{q}}{\binom{m}{q}}
    \le \left(\frac{k}{m}\right)^q
    \le 2^{-q},
\]
where the last inequality follows from $k\le m/2$. Since the utility for any candidate is at most $\tau_{j^*-1}/4$, each such voter obtains at most this much utility, and their total contribution is at most $n\cdot 2^{-q}\tau_{j^*-1}/4$. Moreover, since $q\ge R_{j^*}/16$ and $R_{j^*}=\tau_{j^*-1}/\sigma_{j^*}$, we have $\tau_{j^*-1}/4\le4q\sigma_{j^*}$, and $q2^{-q}\le1/2$ for $q\ge2$. Therefore,
\[
\begin{aligned}
    \sw_{\ell_\infty}(\cmt)
    &\le \frac{8}{7}n\sigma_{j^*}
    +n2^{-q}\frac{\tau_{j^*-1}}{4}\\
    &\le \left(\frac87+4q2^{-q}\right)n\sigma_{j^*}\\
    &<4n\sigma_{j^*}.
\end{aligned}
\]
Putting everything together, the distortion is at least
\[
    \frac{(1-e^{-1/16})n\tau_{j^*-1}/4}
    {4n\sigma_{j^*}}
    =\frac{1-e^{-1/16}}{16}R_{j^*}
    =\Omega(m^{1/t}). \qedhere
\]
\end{proof}

We next consider the case $m/2<k<m$. 

\begin{theorem}\label{thm:best-large-committees}
For $m/2<k<m$ and every threshold vector $\tr$, the distortion of every deterministic rule $f$ is
\[
    \distortion_{\ell_\infty}(f,\tr)=\Omega((m-k)^{1/t}).
\]
\end{theorem}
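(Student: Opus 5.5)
The plan is to reduce to the regime $k\le m/2$ of \Cref{thm:lower-for-best}. We force the rule to include a block of ``dummy'' candidates and then run the Case 1/Case 2 construction on the remaining $2(m-k)$ candidates, whose committee size is half of that.

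Let $s\coloneqq m-k<m/2$. If $\tau_t>1/s$, \Cref{lem:unbounded} already gives unbounded distortion. Otherwise define $\sigma_j=\max\{\tau_j,1/s\}$ and $R_j=\sigma_{j-1}/\sigma_j$ exactly as before, so $\prod_j R_j=s$. Hence some $j^*$ has $R_{j^*}\ge s^{1/t}$. As in \Cref{thm:lower-for-best}, we may assume $R_{j^*}\ge16$, and we set $q=\lfloor R_{j^*}/8\rfloor$, so $2\le q\le s/8$.

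Fix a set $D$ of $2s$ candidates and let $F=\candidates\setminus D$ be the $m-2s=2k-m>0$ dummy candidates.

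\emph{Main voters.} Each main voter has a set $E_i\subseteq D$ of size $q$. Her favorite $c_i^*\in E_i$ gets utility $\tau_{j^*-1}/4$, the other members of $E_i$ get $\sigma_{j^*}$, and the remaining mass is spread evenly over all $m-q$ candidates outside $E_i$, including the dummies. The estimates of \Cref{thm:lower-for-best} go through unchanged: the filler value is at most $1/(m-q)\le 8/(7m)\le 8\sigma_{j^*}/7$, since $\sigma_{j^*}\ge 1/s\ge 1/m$. The reports are again independent of the choice of $c_i^*$.

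\emph{Dummy voters.} For each dummy $d\in F$ we add a large number $M$ of dummy voters. Each of them puts utility in some threshold interval strictly above $\sigma_{j^*}$ on $d$, for instance utility about $1/2$ (or $\tau_{j^*-1}$), and spreads the rest thinly, below $\sigma_{j^*}$ where possible, over the other candidates. Their reports are then fixed and distinguishable from those of the main voters.

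\emph{Step 1: the rule may be assumed to select all of $F$.} Suppose the selected committee $\cmt$ misses some dummy $d$. Any committee containing $F$ is feasible, since $|F|<k$. Compared with $\cmt$, the voters devoted to $d$ lose a factor of order $R_{j^*}$ (high utility versus at most $\max\{\sigma_{j^*},\text{filler}\}$). Choosing $M$ large relative to the number of main voters makes their welfare dominate both the optimum and $\sw_{\ell_\infty}(\cmt)$, so the distortion is already $\Omega(R_{j^*})$. Some care is needed to calibrate the dummy voters' two utility levels so that this ratio really is $\Omega(R_{j^*})$ and not just a constant; I would place the dummy's utility at $\tau_{j^*-1}$ and keep all other utilities of these voters at most $\sigma_{j^*}$.

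\emph{Step 2: the case $F\subseteq\cmt$.} Now $\cmt\cap D$ has exactly $k-|F|=s=|D|/2$ elements. On the main voters this is precisely the situation of \Cref{thm:lower-for-best} with $m$ replaced by $|D|=2s$ and committee size $s=|D|/2$. There are two cases.
\begin{itemize}
\item If $sR_{j^*}\le 2s$, use Case 1: take $2s$ voters with disjoint $E_i$; this needs $2sq\le 2s$, which is adjusted by taking fewer voters, say $s$ voters with disjoint sets, using $q\le s/8$.
\item Otherwise use Case 2: one voter for every $q$-subset of $D$, with the comparison committee $\cmt'=F\cup(D\setminus\cmt)$.
\end{itemize}
The bounds $\binom{s}{q}/\binom{2s}{q}\le 2^{-q}$ and $1-\binom{s}{q}/\binom{2s}{q}\ge 1-e^{-1/16}$ carry over. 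The main voters then give a ratio of $\Omega(R_{j^*})$.

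The dummy voters' welfare is the same under $\cmt$ and under $\cmt'$, since both contain $F$. This shared term weakens the ratio, and this is the step I expect to be the main obstacle. Steps 1 and 2 pull in opposite directions on $M$. To resolve this, I would instead pick $M$ so that the dummy voters' contribution is comparable to, rather than dominant over, the main voters' optimum. Step 1 then only needs the dummy voters' loss to be a constant fraction of the total times $R_{j^*}$, and in Step 2 the additive dummy term at most doubles the denominator. Both cases yield distortion $\Omega(R_{j^*})=\Omega((m-k)^{1/t})$.
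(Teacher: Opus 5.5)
The reduction breaks at Step~1, and the problem is not the calibration of $M$.

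\emph{One omitted dummy costs only a constant factor.} If $\cmt$ omits a single dummy $d$, only the $M$ voters devoted to $d$ are hurt. The other $|F|-1$ groups still receive $\tau_{j^*-1}$ from $\cmt$. As $M$ grows, the ratio therefore tends to at most $|F|/(|F|-1)\le 2$, not $\Omega(R_{j^*})$.

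\emph{The rule can trade dummies for candidates of $D$.} Take $k\ge 3m/4$, so $|F|=2k-m\ge 2s$. The rule may output $D$ together with $k-2s$ dummies, omitting only $s\le|F|/2$ of them. Every $E_i$ lies inside this committee. Hence every main voter gets her maximum possible utility $\tau_{j^*-1}/4$, whatever favorites you choose. At least half of the dummy groups also get their maximum $\tau_{j^*-1}$. So this committee has at least half the optimal welfare, for every choice of $M$, of the number of main voters, and of the sets $E_i\subseteq D$.

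\emph{Your proposed remedy does not help.} Keeping the dummy welfare comparable to the main voters' optimum just makes the dummies irrelevant. That returns you to the original problem, where the committee covers almost all candidates. Making the dummies' high utility adaptive and flattening the main voters with $E_i\subseteq\cmt$ does not rescue it either. At your utility levels ($\sigma_{j^*}$ low, below $\tau_{j^*-1}$ high), the same committee stays within a factor of roughly $1+R_{j^*}(m-k)/(2k-m)$ of optimal. So the argument fails in the regime $m-k=o(m)$, the only one not essentially covered by the $k\le m/2$ proof.

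\emph{What the paper does instead.} It neither reduces to $k\le m/2$ nor relies on a single large gap $R_{j^*}$.
\begin{enumerate}
\item It sets $\sigma_j=\max\{\tau_j,1/m\}$. A single-voter last-threshold argument then gives $\sigma_t\le\min\{D/m,1/(m-k)\}$, where $D$ is the distortion.
\item For every $j$, it proves the one-step bound $\sigma_{j-1}\le 640D\max\{\sigma_j,m\sigma_j^2/(m-k)\}$. The instance has one voter per $q$-subset of all of $\candidates$, with $q=\lfloor1/(8\sigma_j)\rfloor$ tied to $\sigma_j$ rather than to $R_j$.
\item Voters with $E_i\subseteq\cmt$ are numerous when $k>m/2$, and they get a second, flattened utility $\ut_i^{(2)}$. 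This is report-equivalent to the spiked one, so these voters are worth only $O(\sigma_j)$ to $\cmt$. The comparison committee contains all of $\candidates\setminus\cmt$.
\item The resulting loss factor is $\min\{1,(m-k)/(m\sigma_j)\}$, reflecting that a large committee covers most $q$-sets. A backward induction along the whole threshold chain absorbs it by showing $\sigma_j<(m-k)/m$ at every step. This forces $(640D)^t\ge m-k$.
\end{enumerate}
The flattening trick and the chain argument are the ideas your proposal is missing.
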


\begin{proof}
Let $D=\distortion_{\ell_\infty}(f,\tr)$. By \Cref{lem:unbounded}, the distortion is unbounded if $\tau_t>1/(m-k)$. Hence, we may assume that $\tau_t\le1/(m-k)$. For every $j\in\{0,\ldots,t\}$, let $\sigma_j=\max\{\tau_j,1/m\}$.

We first show that $\sigma_t\le D/m$. If $\tau_t\le1/m$, then, since $D \geq 1$, we have $\sigma_t=1/m\le D/m$. Otherwise, since $\tau_t>1/m$, consider an instance with a single voter $i$ who reports all candidates in the last approval set. Let $\cmt=f(\sprofile)$ and choose a candidate $x\notin\cmt$. 
For any $u\in(1/m,\tau_t)$, voter $i$ has utility $u$ for $x$ and utility $(1-u)/(m-1)$ to every other candidate. Since $(1-u)/(m-1)<1/m<\tau_t$, these utilities induce the same report, and thus the chosen committee must be $\cmt$. Since the welfare of $W$ is $(1-u)/(m-1)$ and the optimal welfare is $u$, taking $u\to\tau_t$ and using $\tau_t>1/m$, we obtain
\[
    D\ge\frac{(m-1)\tau_t}{1-\tau_t}\ge m\tau_t.
\] 
Therefore, $\sigma_t\le D/m$. Combining this with $\tau_t\le1/(m-k)$ and $1/m\le1/(m-k)$, we conclude that
\begin{equation}\label{eq:large-committee-last}
    \sigma_t\le\min\left\{\frac Dm,\frac1{m-k}\right\}.
\end{equation}
We next show that, for every $j\in[t]$,
\begin{equation}\label{eq:large-committee-step}
    \sigma_{j-1}\le640D\max\left\{\sigma_j,\frac{m\sigma_j^2}{m-k}\right\}.
\end{equation}
As in previous proofs, let $R_j=\sigma_{j-1}/\sigma_j$. If $R_j\le640D$, the inequality follows directly from the definition of $R_j$. Hence, we may assume that $R_j>640D\ge640$. Since $\sigma_j\ge1/m$, this implies that $\sigma_{j-1} = R_j \sigma_j >1/m$, and thus $\sigma_{j-1}=\tau_{j-1}$. Moreover, $\sigma_j=\tau_{j-1}/R_j<1/640$.
Using all these, we have
\[
    \tau_j\le\sigma_j
    =\frac{\tau_{j-1}}{R_j}
    <\frac{\tau_{j-1}}{16}
    <\frac{\tau_{j-1}}4
    <\tau_{j-1}.
\]
Let $q=\lfloor1/(8\sigma_j)\rfloor$. Since $1/m\le\sigma_j<1/16$, we have $2\le q\le m/8$. Moreover, since $1/(8\sigma_j)\ge2$, we have $q\ge1/(16\sigma_j)$.

Consider an instance with one voter for every size-$q$ subset $E_i$ of $\candidates$, so that $n=\binom mq$. For each voter $i$, we consider two possible utility functions: 
\[
    \ut_i^{(1)}(c)=
    \begin{cases}
        \tau_{j-1}/4, & \text{ if } c=c_i^*,\\[2mm]
        \sigma_j, & \text{ if }  c\in E_i\setminus\{c_i^*\},\\[2mm]
        \dfrac{1-\tau_{j-1}/4-(q-1)\sigma_j}{m-q}, & \text{ if }  c\notin E_i.
    \end{cases}
\]
and 
\[
    \ut_i^{(2)}(c)=
    \begin{cases}
        \dfrac{\tau_{j-1}/4+(q-1)\sigma_j}{q}, & \text{ if }  c\in E_i,\\[4mm]
        \dfrac{1-\tau_{j-1}/4-(q-1)\sigma_j}{m-q}, & \text{ if }  c\notin E_i.
    \end{cases}
\]
Both utility functions satisfy the unit-sum assumption by construction. They are also non-negative, since
\[
    \frac{\tau_{j-1}}4+(q-1)\sigma_j
    \le\frac{\tau_{j-1}}4+\frac18
    \le\frac38.
\]
Moreover, since $q\le m/8$ and $\sigma_j\ge1/m$, for every candidate $c\notin E_i$ we have
\[
    \ut_i^{(1)}(c)=\ut_i^{(2)}(c)
    \le\frac1{m-q}
    \le\frac8{7m}
    \le\frac{8\sigma_j}7.
\]
For every candidate $c\in E_i$, the utility $\ut_i^{(1)}(c)$ lies between $\sigma_j$ and $\tau_{j-1}/4$. The same holds for $\ut_i^{(2)}(c)$, since it is the average of one value $\tau_{j-1}/4$ and $q-1$ values $\sigma_j$. Since
$\tau_j\le\sigma_j<\tau_{j-1}/4<\tau_{j-1}$,
the utilities of voter $i$ for all candidates in $E_i$ belong to the same threshold interval under both functions. Together with the fact that the utilities outside $E_i$ are identical, this implies that both utility functions induce the same threshold approvals, independently of the choice of $c_i^*$.

Let $\cmt=f(\sprofile)$. For every voter $i$ with $E_i\setminus\cmt\ne\varnothing$, let $\ut_i=\ut_i^{(1)}$, choosing $c_i^*\in E_i\setminus\cmt$. For every voter with $E_i\subseteq\cmt$, let $\ut_i=\ut_i^{(2)}$. Since both utility functions induce the same threshold approvals, these choices do not affect the selected committee $\cmt$.
We first upper-bound the welfare of $\cmt$.
\begin{itemize}
    \item If $E_i\not\subseteq\cmt$, then by the choice of $c_i^*$ in $\ut_i^{(1)}$, we have $c_i^*\notin\cmt$. Thus, voter $i$ has utility at most $\sigma_j$ for every candidate in $E_i\cap\cmt$ and at most $8\sigma_j/7$ for every candidate in $\cmt\setminus E_i$. Therefore, her utility for $\cmt$ is at most $8\sigma_j/7$.

    \item If $E_i\subseteq\cmt$, then since $q\ge1/(16\sigma_j)$ and $\tau_{j-1}\le1$, for every candidate $c\in E_i$ we have
    \[
        \ut_i^{(2)}(c)
        =\frac{\tau_{j-1}/4+(q-1)\sigma_j}{q}
        \le\sigma_j+\frac{\tau_{j-1}}{4q}
        \le\sigma_j+4\tau_{j-1}\sigma_j
        \le5\sigma_j.
    \]
    Moreover, the utility of $i$ for every candidate outside $E_i$ is at most $8\sigma_j/7<5\sigma_j$. Hence, her utility for $\cmt$ is at most $5\sigma_j$.
\end{itemize}
Since $8\sigma_j/7<5\sigma_j$, every voter obtains utility at most $5\sigma_j$ for $\cmt$. 
Summing over all voters, we obtain
\[
    \sw_{\ell_\infty}(\cmt)\le5n\sigma_j.
\]

We next lower-bound the optimal welfare by considering a committee $\cmt'$ of size $k$ containing all candidates outside $\cmt$; note that such a committee must exist since $m-k<k$. We have two types of voters:
\begin{itemize}
    \item If $E_i\subseteq\cmt$, then voter $i$ has utility function $\ut_i^{(2)}$. Since there is one voter for every size-$q$ subset of $\candidates$, there are exactly $\binom{k}{q}$ such voters. We do not need to account their contribution to the welfare of $\cmt'$.

    \item If $E_i\not\subseteq\cmt$, then voter $i$ has utility function $\ut_i^{(1)}$, with $c_i^*\in E_i\setminus\cmt$. Since $\cmt'$ contains all candidates outside $\cmt$, we have $c_i^*\in\cmt'$, and thus voter $i$ obtains utility at least $\tau_{j-1}/4$ for $\cmt'$. There are $n-\binom{k}{q}$ such voters.
\end{itemize}
The fraction of voters who obtain utility at least $\tau_{j-1}/4$ for $\cmt'$ is at least $1-\binom{k}{q}/\binom{m}{q}$. Using $1-x\le e^{-x}$ and $q\ge1/(16\sigma_j)$, we have
\[
    1-\frac{\binom{k}{q}}{\binom mq}
    =1-\prod_{\ell=0}^{q-1}\frac{k-\ell}{m-\ell}
    \ge1-\left(\frac km\right)^q
    =1-\left(1-\frac{m-k}{m}\right)^q
    \ge1-e^{-(m-k)q/m}
    \ge1-e^{-(m-k)/(16m\sigma_j)}.
\]
Since each voter of the second type obtains utility at least $\tau_{j-1}/4$ for $\cmt'$, we conclude that
\[
    \sstar_{\ell_\infty}
    \ge\frac{n\tau_{j-1}}4
    \left(1-e^{-(m-k)/(16m\sigma_j)}\right).
\]

Combining the bounds on the optimal welfare and the welfare of $\cmt$, and using $\tau_{j-1}=R_j\sigma_j$, we obtain
\[
\begin{aligned}
    D
    &\ge\frac{\tau_{j-1}}{20\sigma_j}
    \left(1-e^{-(m-k)/(16m\sigma_j)}\right)\\
    &=\frac{R_j}{20}
    \left(1-e^{-(m-k)/(16m\sigma_j)}\right)\\
    &\ge\frac{R_j}{40}
    \min\left\{1,\frac{m-k}{16m\sigma_j}\right\}\\
    &\ge\frac{R_j}{640}
    \min\left\{1,\frac{m-k}{m\sigma_j}\right\},
\end{aligned}
\]
where the third inequality follows from $1-e^{-x}\ge\frac12\min\{1,x\}$ for every $x\ge0$. Since $R_j=\sigma_{j-1}/\sigma_j$, rearranging the last inequality yields \eqref{eq:large-committee-step}.

It remains to combine \eqref{eq:large-committee-last} and \eqref{eq:large-committee-step} to obtain the desired lower bound. Let $B=640D$ and suppose towards a contradiction that $B^t<m-k$. We show by backward induction that $\sigma_j\le B^{t-j}\sigma_t$ for every $j\in\{0,\ldots,t\}$. The claim clearly holds for $j=t$. Suppose that it holds for some $j\ge1$. By \eqref{eq:large-committee-last}, we have $\sigma_t\le D/m\le B/m$. Using $B\ge1$ and $j\ge1$, we have that
\[
    \sigma_j
    \le B^{t-j}\sigma_t
    \le\frac{B^{t-j+1}}m
    \le\frac{B^t}m
    <\frac{m-k}{m}
\]
Since $\sigma_j<(m-k)/m$, we further have $m\sigma_j^2/(m-k)<\sigma_j$. Therefore, by \eqref{eq:large-committee-step}, $\sigma_{j-1} \le B\sigma_j \le B^{t-j+1}\sigma_t$, completing the induction step. The bound for $j=0$, together with $\sigma_0=1$ and $\sigma_t\le1/(m-k)$ from \eqref{eq:large-committee-last}, implies that
\[
    1=\sigma_0
    \le B^t\sigma_t
    \le\frac{B^t}{m-k}
    <1,
\]
a contradiction. Hence, $B^t\ge m-k$, which, by the definition of $B$, implies
\[
    D\ge\frac1{640}(m-k)^{1/t}
    =\Omega((m-k)^{1/t}). \qedhere
\]
\end{proof}

Combining the two lower bounds with the deterministic upper bound from \Cref{thm:det}, we obtain the following characterization for best-representative utility.

\begin{corollary}\label{cor:tight-best-rep}
When $m-k=\Omega(m)$, the optimal deterministic distortion for best-representative utility with $t$ thresholds is $\Theta(m^{1/t})$.
\end{corollary}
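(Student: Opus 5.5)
The claim is a two-sided $\Theta$ statement, so the plan is to combine one upper bound with the two lower-bound theorems of this section. The condition $m-k=\Omega(m)$ means there is a constant $\alpha\in(0,1]$ with $m-k\ge\alpha m$ for all relevant $m$. We also have $k<m$ throughout, since otherwise the committee is forced and the distortion is trivially $1$.

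\textbf{Upper bound.} Apply \Cref{thm:det} with $\delta=(2m)^{1/t}$. The rule $\fd_\delta$ has distortion at most $2(2m)^{1/t}\le 4m^{1/t}$, because $2^{1/t}\le 2$. This gives $O(m^{1/t})$ for every norm, in particular for $\ell_\infty$.

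\textbf{Lower bound.} Split according to the committee size.
\begin{itemize}
\item If $k\le m/2$, then \Cref{thm:lower-for-best} gives $\Omega(m^{1/t})$ directly.
\item If $m/2<k<m$, then \Cref{thm:best-large-committees} gives $\Omega((m-k)^{1/t})$. Using $m-k\ge\alpha m$, we get $(m-k)^{1/t}\ge\alpha^{1/t}m^{1/t}\ge\alpha\, m^{1/t}$, since $\alpha\le1$ and $t\ge1$. So this case is also $\Omega(m^{1/t})$, with a constant depending only on $\alpha$.
\end{itemize}

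The only point needing care is uniformity in $t$: the factor $\alpha^{1/t}$ must not degrade, and it doesn't, because it is bounded below by $\alpha$ for every $t\ge1$. I also need to confirm that the hidden constants in both lower-bound theorems are absolute. In \Cref{thm:best-large-committees} the constant is explicitly $1/640$. In \Cref{thm:lower-for-best} the constants are $1/28$ and $(1-e^{-1/16})/16$, together with the factor $\tfrac12$ from $(m-k)^{1/t}\ge\tfrac12 m^{1/t}$, all independent of $m$, $k$ and $t$. Combining the two directions then gives $\Theta(m^{1/t})$.
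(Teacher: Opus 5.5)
Your proposal is correct and follows the paper's argument: the upper bound comes from \Cref{thm:det}, and the lower bound comes from \Cref{thm:lower-for-best} for $k\le m/2$ and from \Cref{thm:best-large-committees} for $m/2<k<m$, where $m-k=\Omega(m)$ turns $(m-k)^{1/t}$ into $\Omega(m^{1/t})$. Your explicit check that the hidden constants are absolute and uniform in $t$ (via $\alpha^{1/t}\ge\alpha$) is a detail the paper leaves implicit.
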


\subsection{Randomized Rules}

We now turn to randomized rules and establish tight bounds for every committee size.

\begin{theorem}\label{thm:randomized-best}
For $1\le k\le m$, the optimal randomized distortion for best-representative utility with $t$ thresholds is
\[
    \Theta\left(\min\left\{m^{1/(t+1)},\frac mk\right\}\right).
\]
\end{theorem}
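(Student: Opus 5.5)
The upper bound is immediate from \Cref{thm:rand}: the rule $\fr_\delta$ with $\delta=m^{1/(t+1)}$ has distortion at most $2\min\{m^{1/(t+1)},m/k\}$ for every normalized monotone symmetric norm, in particular $\ell_\infty$. For $k=m$ the distortion is trivially $1$, which matches $m/k=1$. So all the work is in the lower bound $\Omega(\min\{m^{1/(t+1)},m/k\})$ for $1\le k<m$. I will prove this by reworking the randomized large-gap argument of \Cref{lem:rand-last-committee,lem:rand-large-gap}, exploiting the best-representative structure much as \Cref{thm:lower-for-best} does. The key difference is that the denominator should lose only a constant factor instead of $L_{\N,k}+k=k+1$.

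The plan has three steps.

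\textbf{Step 1: reduction.} Set $M\coloneqq\min\{m^{1/(t+1)},m/k\}$. Since the distortion is always at least $1$, I may assume $M$ exceeds a large constant; in particular $k\le m/C$, so $m-k=\Theta(m)$.

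\textbf{Step 2: last-threshold case.} If $\tau_t>m^{-t/(t+1)}$, I would use many voters. Each voter $i$ gets a random-looking set $E_i$ of size $q\approx 1/(\text{const}\cdot\tau_t)$. All candidates are reported in the last approval set, so the rule's marginals $p_c$ are fixed and satisfy $\sum_c p_c=k$. I let voter $i$ put utility close to $\tau_t$ on one favorite $c_i^*\in E_i$ and roughly $1/m$ on everything else, choosing $c_i^*$ to minimize $p_c$ within $E_i$.

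\emph{Upper bound on the rule.} The expected best-representative welfare of the rule is at most $n\cdot O(1/m)+\sum_i \tau_t p_{c_i^*}$. Averaging over all $q$-subsets gives $\min_{c\in E}p_c\le k/m$ on average; more strongly, with one voter per $q$-subset the sum $\sum_i p_{c_i^*}$ is at most about $n\,k/(qm)$, since the minimum of $q$ marginals is at most their average.

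\emph{Lower bound on the optimum.} A committee $\cmt'$ of $k$ favorites covers a fraction $\gtrsim\min\{1,kq/m\}$ of the voters. So the optimum is at least $n\tau_t\min\{1,kq/m\}$.

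\emph{Ratio.} Taking $q$ so that $kq\approx m$ when possible (or $q$ maximal with $q\tau_t\le$ const), the ratio comes out as $\Omega(\min\{m\tau_t,m/k\})$, which is $\Omega(M)$ because $m\tau_t> m^{1/(t+1)}$. The subtle point is that the favorites must be chosen after seeing the marginals, while the covering argument needs the favorites to be spread over few candidates. I would handle this by taking $\cmt'$ to be the $k$ candidates of smallest marginal, and assigning $c_i^*\in E_i\cap\cmt'$ whenever that intersection is nonempty. This mirrors Case 2 of \Cref{thm:lower-for-best}, with expectation bounds replacing the set disjointness.

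\textbf{Step 3: large-gap case.} Otherwise $\prod_j R_j\ge m^{t/(t+1)}/O(1)$ with $\sigma_j=\max\{\tau_j,1/m\}$, so some $R_{j^*}\gtrsim m^{1/(t+1)}$. I then run the same construction inside approval set $j^*$. The favorite gets $\tau_{j^*-1}/4$, the other $q-1$ members of $E_i$ get $\sigma_{j^*}$ with $q\approx R_{j^*}/8$, and the rest get $O(1/m)$. The same marginal-based choice of $\cmt'$ and $c_i^*$ gives rule welfare at most $n\,O(\sigma_{j^*})$ plus $n\,\tau_{j^*-1}\cdot O(k/(qm))$. The optimum is at least $n\tau_{j^*-1}\min\{1,kq/m\}/4$. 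The ratio is $\Omega(\min\{R_{j^*},m/k\})$.

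\textbf{Main obstacle.} The hardest part is the correlation between the rule's randomness and the adversarial favorites in Steps 2 and 3. A voter's realized utility depends on whether \emph{any} member of $E_i$ is selected, not only $c_i^*$. I would bound her expected utility by $O(\sigma_{j^*})+\tfrac{\tau_{j^*-1}}{4}\,p_{c_i^*}$, which holds because the other members have value at most $\sigma_{j^*}$. Then it suffices to show $\sum_i p_{c_i^*}$ is small when $c_i^*$ is restricted to the low-marginal set $\cmt'$. Since every candidate in $\cmt'$ has marginal at most $k/(m-k)$-ish (it is among the $k$ smallest), this gives $\sum_i p_{c_i^*}\le n\,O(k/m)$. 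The resulting ratio is $\Omega(\min\{1,kq/m\}\cdot\min\{R,m/k\})$; with $q\approx R$ this is $\Omega(\min\{R,m/k\})$ in both regimes $kR\le m$ and $kR>m$. Checking that this bookkeeping closes in both regimes is the step I expect to require the most care.
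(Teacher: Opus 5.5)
Your upper bound is fine, and Step 2 can be made to work if you take $q\ge m/k$. With $q\approx1/\tau_t$ that construction certifies only $O(k)$. In any case, a single voter whose favorite $x$ has $p_x\le k/m$, as in the paper, already gives $m\tau_t/(1+k\tau_t)\ge\frac12\min\{m\tau_t,m/k\}$.

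The genuine gap is in Step 3, in the regime $kR_{j^*}\le m$, which contains the basic case of constant $k$. With $q\approx R_{j^*}/8$ and one voter per $q$-subset, the optimum serves only a $\Theta(kq/m)$ fraction of the voters. Your own bounds then give a ratio of about $kR_{j^*}^2/m$, not $R_{j^*}$. The claimed identity $\min\{1,kq/m\}\cdot\min\{R,m/k\}=\Omega(\min\{R,m/k\})$ fails for $kR\le m$, where the left side equals $kR^2/m$; this is at most $1$ for $k=1$ and $R\approx m^{1/(t+1)}$. Also, the favorite term should be $n\tau_{j^*-1}\,O(k/m)$, not $O(k/(qm))$: without disjointness, ``the minimum of $q$ marginals is at most their average'' only gives $k/m$ per voter.

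This is not just slack in the analysis. For $k=1$, a uniformly random candidate earns exactly $n/m$ by unit-sum. Any candidate $c$ lies in only $nq/m$ of the sets $E_i$, so $\sw_{\ell_\infty}(\{c\})\le\frac{nq}{m}\cdot\frac{\tau_{j^*-1}}{4}+\frac{8n}{7m}$, and the instance certifies at most $O(1+\tau_{j^*-1}R_{j^*})$. When the large gap sits low, this is far too small. For example, take $t=3$, $m=a^4$ with $a$ large, and $\tr=(a^{-1/2},a^{-3/2},a^{-3}/2)$, which falls in your Step 3. Every choice of $j^*$ yields $O(a^{1/2})=O(m^{1/8})$, whereas the target is $\Omega(m^{1/4})$.

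The missing idea is to hide the favorite among $q\approx1/\sigma_{j^*}$ candidates rather than $q\approx R_{j^*}=\tau_{j^*-1}/\sigma_{j^*}$. Then the other members of $E_i$ absorb essentially all of the voter's mass, and the favorites' marginals can be controlled through disjointness. The paper does exactly this:
\begin{itemize}
\item it takes $\sigma_j=\max\{\tau_j,1/(m-1)\}$, $q=\lfloor1/(4\sigma_j)\rfloor$, and $g=\min\{k,\lfloor m/q\rfloor\}$ voters with pairwise disjoint sets $E_i$;
\item voter $i$ gives $\tau_{j-1}/2$ to a minimum-marginal $x_i\in E_i$, gives $(1-\tau_{j-1}/2)/(q-1)\in[2\sigma_j,6\sigma_j)$ to the rest of $E_i$, and gives $0$ outside;
\item disjointness yields $\sum_ip_{x_i}\le k/q$, so the rule earns at most $6g\sigma_j+k\tau_{j-1}/(2q)$ against an optimum of at least $g\tau_{j-1}/2$;
\item the resulting ratio $1/(12/R_j+k/(gq))$ is $\Omega(\min\{R_j,m/k\})$ in both subcases $g=k$ and $g=\lfloor m/q\rfloor$.
\end{itemize}

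Alternatively, you could keep $q\approx R_{j^*}/8$ but, when $kR_{j^*}\le m$, switch to $k$ voters with pairwise disjoint sets $E_i$ and minimum-marginal favorites. This mirrors Case 1 rather than Case 2 of \Cref{thm:lower-for-best}. Then $\sum_ip_{c_i^*}\le k/q$ and the ratio is $\Omega(R_{j^*})$, while your one-voter-per-subset argument remains valid for $kR_{j^*}>m$.
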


\begin{proof}
The upper bound follows from \Cref{thm:rand}. For the lower bound, consider any randomized rule $f$ using threshold vector $\tr$, and let $a=m^{1/(t+1)}$. If $a<64$ or $k>m/2$, the lower bound follows from the fact that distortion is at least $1$, since $\min\{a,m/k\}\le64$. Hence, assume that $a\ge64$ and $k\le m/2$.

We distinguish two cases depending on the value of the last threshold $\tau_t$.

\medskip
\noindent
{\bf Case 1: $\tau_t>a^{-t}$.}
Since $\tau_t>a^{-t}>1/m$, consider an instance with a single voter $i$ reporting all candidates in the last approval set. Let $\cmt=f(\sprofile)$ be the random committee selected by the rule, and let $p_c$ denote the probability that candidate $c$ is included in $\cmt$. Since the rule always selects a committee of size $k$, we have $\sum_{c\in\candidates}p_c=k$, and thus there exists a candidate $x$ with $p_x\le k/m$.

For any $u\in(1/m,\tau_t)$, we define the utility of $i$ to be $u$ for $x$, and $(1-u)/(m-1)$ for every other candidate. Since $(1-u)/(m-1)<1/m<\tau_t$, these utilities induce the same report, and thus the probabilities $p_c$ remain unchanged. The voter obtains utility $u$ if $x$ is selected, and utility $(1-u)/(m-1)$ otherwise. Therefore, since $u>1/m$ and $p_x\le k/m$, the expected welfare of the rule is at most
\[
    \expected[\sw_{\ell_\infty}(\cmt)] \le \frac{1-u}{m-1}+p_xu \le\frac1m+\frac km u.
\]
The optimal committee contains $x$ and achieves welfare $u$. 
Taking $u\to\tau_t$, the distortion is at least
\[
    \frac{m\tau_t}{1+k\tau_t}
    \ge\frac12\min\left\{m\tau_t,\frac mk\right\}
    \ge\frac12\min\left\{a,\frac mk\right\}.
\]
Here, the second inequality follows from $1+k\tau_t\le2\max\{1,k\tau_t\}$, and the last inequality follows from $\tau_t>a^{-t}$ and $m=a^{t+1}$.

\medskip
\noindent
{\bf Case 2: $\tau_t\le a^{-t}$.}
For every $j\in\{0,\ldots,t\}$, let $\sigma_j=\max\{\tau_j,1/(m-1)\}$, and $R_j=\sigma_{j-1}/\sigma_j$ for every $j\in[t]$. Since $1/(m-1)\le2a^{-t}$, we have $\sigma_t\le2a^{-t}$. Therefore,
\[
    \prod_{j=1}^t R_j
    =\frac{\sigma_0}{\sigma_t}
    =\frac1{\sigma_t}
    \ge\frac{a^t}{2},
\]
and thus there must exist $j\in[t]$ such that $R_j\ge a/2\ge32$. 
Since $R_j>1$ and $\sigma_j\ge1/(m-1)$, we have $\sigma_{j-1}>1/(m-1)$, and thus $\sigma_{j-1}=\tau_{j-1}$. Putting these together, we have
\[
    \tau_j\le\sigma_j
    =\frac{\tau_{j-1}}{R_j}
    \le\frac{\tau_{j-1}}{32}
    \le\frac1{32}.
\]
Let $q=\lfloor1/(4\sigma_j)\rfloor$. Since $1/(m-1)\le\sigma_j\le1/32$, we have $8\le q\le(m-1)/4$. Moreover, since $1/(4\sigma_j)\ge8$ and $\tau_{j-1}=R_j\sigma_j\le1$, we have $q\ge\frac1{8\sigma_j}\ge\frac{R_j}{8}$.

Now let $g=\min\{k,\lfloor m/q\rfloor\}$ and consider an instance with $g$ voters that have pairwise disjoint sets $E_1,\ldots,E_g$, each containing $q$ candidates. Such sets exist since $gq\le m$. 
Every voter $i\in[g]$ reports $E_i$ as her $j$-th approval set and $\candidates\setminus E_i$ as her last approval set.
Let $\cmt=f(\sprofile)$ and, for every candidate $c$, let $p_c$ be the probability that $c$ is included in $\cmt$. For every voter $i$, choose a candidate $x_i\in E_i$ with minimum inclusion probability among the candidates in $E_i$. We define her utility function as follows:
\[
    \ut_i(c)=
    \begin{cases}
        \tau_{j-1}/2, & \text{if } c=x_i,\\
        \dfrac{1-\tau_{j-1}/2}{q-1}, & \text{if }  c\in E_i\setminus\{x_i\},\\
        0, & \text{if }  c\notin E_i.
    \end{cases}
\]
These utilities are non-negative and satisfy the unit-sum assumption by construction. We next verify that they induce the reported threshold approvals. Since $\tau_{j-1}\le1$ and $q\le1/(4\sigma_j)$, we have
\[
    \frac{1-\tau_{j-1}/2}{q-1}
    \ge\frac{1/2}{q-1}
    \ge2\sigma_j.
\]
On the other hand, since $\sigma_j\le1/32$ and $q=\lfloor1/(4\sigma_j)\rfloor$, we have
\[
    q-1\ge\frac1{4\sigma_j}-2
    \ge\frac3{16\sigma_j},
\]
and therefore
\[
    \frac{1-\tau_{j-1}/2}{q-1}
    \le\frac{16\sigma_j}{3}
    <6\sigma_j.
\]
Combining these inequalities with $R_j\ge32$, we obtain
\[
    \tau_j\le\sigma_j
    <2\sigma_j
    \le\frac{1-\tau_{j-1}/2}{q-1}
    <6\sigma_j
    <\frac{\tau_{j-1}}2
    <\tau_{j-1}.
\]
Thus, these utilities induce the reported approval sets, with $E_i$ as the $j$-th approval set and $\candidates\setminus E_i$ as the last approval set. In particular, choosing the candidates $x_i$ after fixing the rule's lottery does not affect the reported threshold approvals or the inclusion probabilities $p_c$.

We first upper-bound the expected welfare of the rule. Since $x_i$ has minimum inclusion probability in $E_i$, we have $p_{x_i}\le\frac1q\sum_{c\in E_i}p_c.$    
As the sets $E_i$ are pairwise disjoint and the rule always selects $k$ candidates, summing over all voters, we obtain
\[
    \sum_{i=1}^g p_{x_i}
    \le\frac1q\sum_{c\in\bigcup_{i=1}^g E_i}p_c
    \le\frac kq.
\]
For every voter $i$, her utility for $x_i$ is $\tau_{j-1}/2$, while her utility for every other candidate is at most $6\sigma_j$. Hence, her expected utility for $\cmt$ is at most $6\sigma_j+p_{x_i}\tau_{j-1}/2$. Summing over all voters, we obtain
\[
    \expected[\sw_{\ell_\infty}(\cmt)]
    \le6g\sigma_j +\frac{\tau_{j-1}}2\sum_{i=1}^g p_{x_i}
    \le6g\sigma_j+\frac{k\tau_{j-1}}{2q}.
\]
We next lower-bound the optimal welfare. Since the sets $E_i$ are pairwise disjoint, the candidates $x_1,\ldots,x_g$ are distinct. Moreover, $g\le k$, and thus there exists a committee of size $k$ containing all these candidates. Every voter $i$ obtains utility $\tau_{j-1}/2$ for this committee, implying
\[
    \sstar_{\ell_\infty}\ge\frac{g\tau_{j-1}}2.
\]
Combining the bounds, and using $\tau_{j-1}=R_j\sigma_j$, the distortion is at least
\[
    \frac{g\tau_{j-1}/2}
    {6g\sigma_j+k\tau_{j-1}/(2q)}
    =\frac1{12/R_j+k/(gq)}.
\]
We distinguish two cases depending on the value of $g$.
\begin{itemize}
    \item If $g=k$, then $k/(gq)=1/q\le8/R_j$, and thus the distortion is at least
    \[
        \frac1{12/R_j+1/q}
        \ge\frac{R_j}{20}.
    \]

    \item If $g<k$, then $g=\lfloor m/q\rfloor$. Since $q\le(m-1)/4$, we have $gq\ge m-q\ge m/2$, and therefore
    \[
        \frac{12}{R_j}+\frac{k}{gq}
        \le\frac{12}{R_j}+\frac{2k}{m}
        \le14\max\left\{\frac1{R_j},\frac km\right\}.
    \]
    Consequently, the distortion is at least
    \[
        \frac1{14}
        \min\left\{R_j,\frac mk\right\}.
    \]
\end{itemize}
In either case, since $R_j\ge a/2$, we conclude that the distortion is at least
\[
    \frac1{20}\min\left\{R_j,\frac mk\right\}
    \ge\frac1{40}\min\left\{a,\frac mk\right\} 
    = \Omega(\min\{m^{1/(t+1)},m/k\}).
\]
This completes the proof.
\end{proof}

\section{Conclusion and Open problems} 
We studied multiwinner voting when voters partition the candidates into threshold approval sets. For a broad class of committee utilities defined through normalized monotone symmetric norms, we designed simple deterministic and randomized rules whose distortion improves as more thresholds are available. Our results show that logarithmically many thresholds suffice for constant distortion, while meaningful guarantees are possible even with a single threshold. We also established lower bounds that show how the achievable distortion depends on the number of candidates, the committee size, and the chosen norm. For single-winner voting and, more generally, for committees of constant size, our upper and lower bounds match up to constant factors, yielding tight distortion bounds of $\Theta(m^{1/t})$ for deterministic rules and $\Theta(m^{1/(t+1)})$ for randomized rules. For best-representative utility, we further established tight randomized bounds for every committee size and tight deterministic bounds whenever $m-k=\Omega(m)$.

Several questions remain open. For general normalized monotone symmetric norms, closing the gap between our upper and lower bounds for growing committee sizes would provide a better understanding of how the committee size and the chosen norm affect distortion. For best-representative utility, determining the optimal deterministic distortion when $1\le m-k=o(m)$ remains open. Another direction concerns computation: maximizing virtual welfare may be computationally difficult for some committee-utility functions, making it natural to ask whether comparable distortion guarantees can be achieved by polynomial-time rules. Finally, it would be interesting to study whether non-geometric, adaptive, or voter-specific thresholds can improve the tradeoff between the amount of elicited information and distortion.

\subsubsection*{AI Use Disclosure}
ChatGPT-5.6 Sol was used to assist with the writing and revision of the manuscript. The authors have reviewed the resulting material and take full responsibility for the content and correctness of the paper.

\bibliographystyle{plainnat} 
\bibliography{references}


\end{document}